\title{Probabilistic Counting in Generalized Turnstile Models\thanks{This work was supported by NSF grant CCF-2221980. }}
\author{Dingyu Wang\\University of Michigan\\{wangdy@umich.edu}}
\date{}
\theoremstyle{plain}
\newtheorem{theorem}{Theorem}
\newtheorem{lemma}{Lemma}
\theoremstyle{definition}
\newtheorem{definition}{Definition}
\newtheorem{assumption}{Assumption}
\theoremstyle{remark}
\newtheorem{remark}{Remark}
\DeclarePairedDelimiter{\norm}{\|}{\|}
\newcommand{\ind}[1]{\mathbbm{1}\left[#1\right]}
\newcommand{\var}{\mathbb{V}}
\newcommand{\pr}{\mathbb{P}}
\newcommand{\E}{\mathbb{E}}
\newcommand{\R}{\mathbb{R}}
\newcommand{\Z}{\mathbb{Z}}
\newcommand{\N}{\mathbb{N}}
\newcommand{\F}{\mathbb{F}}
\newcommand{\pcsa}{\textsf{PCSA}}
\newcommand{\knw}{\textsf{KNW}}
\newcommand{\bitmap}{\textrm{BITMAP}}
\newcommand{\fieldmap}{\textrm{FIELDMAP}}
\newcommand{\distcount}{\textrm{Distinct-Count}}
\begin{document}
\maketitle

\begin{abstract}
Traditionally in the turnstile model of data streams, there is a state vector $x=(x_1,x_2,\ldots,x_n)$ which is updated through a stream of pairs $(i,k)$ where $i\in [n]$ and $k\in \Z$. Upon receiving $(i,k)$, $x_i\gets x_i + k$. A distinct count algorithm in the turnstile model takes one pass of the stream and then estimates $\norm{x}_0 = |\{i\in[n]\mid x_i\neq 0\}|$ (aka $L_0$, the Hamming norm).

In this paper, we define a finite-field version of the turnstile model. Let $F$ be any finite field. Then in the $F$-turnstile model,  for each $i\in [n]$, $x_i\in F$; for each update $(i,k)$, $k\in F$. The update $x_i\gets x_i+k$ is then computed in the field $F$. A distinct count algorithm in the $F$-turnstile model takes one pass of the stream and estimates $\norm{x}_{0;F} = |\{i\in[n]\mid x_i\neq 0_F\}|$. 

We present a simple distinct count algorithm, called $F$-\pcsa{}, in the $F$-turnstile model for any finite field $F$. The new $F$-\pcsa{} algorithm takes $m\log(n)\log (|F|)$ bits of memory and estimates $\norm{x}_{0;F}$ with $O(\frac{1}{\sqrt{m}})$ relative  error where the hidden constant depends on the order of the field.

$F$-\pcsa{} is straightforward to implement and has several applications in the real world with different choices of $F$. Most notably, 
\begin{itemize}
    \item $\F_2$-\pcsa{} can be used to estimate the number of non-zero elements in a boolean system where every element has a boolean state which can be toggled during a stream of updates;
    \item $\F_{2^k}$-\pcsa{} can be used to estimate the number of non-zero elements in a $k$-boolean system where every element has a vector of $k$ boolean attributes which can be individually toggled during a stream of updates;
    \item $\F_p$-\pcsa{} can be used to implement simple and efficient $L_0$-estimation algorithms in the (original) turnstile model, making the probabilistic counting with deletions conceptually no harder than the probabilistic counting without deletions. With different schemes to select $p$, one gets different upper bounds. To get a $(1\pm \epsilon)$-estimation of $L_0$ with 1/3 error probability, one needs
    \begin{itemize}
    \item $O(\epsilon^{-2}\log n \cdot (\log \epsilon^{-1}+\log\log \norm{x}_\infty))$ bits, matching the state of the art $L_0$-estimation algorithm by Kane, Nelson and Woodruff \cite{kane2010optimal};
    \item $O(\epsilon^{-2}\log n \cdot \log \norm{x}_\infty)$ bits, which is useful when the maximum frequency is small;
    \item $O(\epsilon^{-2}\log n \cdot \max(\log \frac{\norm{x}_1}{\norm{x}_0}, \log \epsilon^{-1})) $ bits, which is useful when the average non-zero frequency is low.
\end{itemize}
The upper bounds above can be further sharpened by ignoring the top $O(\epsilon)$-fraction of elements with non-zero frequencies.
\end{itemize}

\end{abstract}

\section{Introduction}
\emph{\distcount} is the problem of approximately counting the number of distinct elements in a stream. In 1983, Flajolet and Martin \cite{flajolet1985probabilistic} proposed the first streaming \distcount{} algorithm, Probabilistic Counting with Stochastic Averaging (\pcsa{}).  Although being the first streaming \distcount{} algorithm in history, the \pcsa{} sketch already possesses two fundamental characteristics that are desirable when counting distinct elements.
\begin{itemize}
    \item ``Zero knowledge''. The distribution of \pcsa{}'s memory state depends only on the cardinality of the stream. 
    \item ``Scale invariance''. The behavior of the sketch is multiplicatively periodic in the cardinality. 
\end{itemize}

However, \pcsa{} only works in the \emph{incremental} setting, in which only insertions are allowed.

\begin{definition}[incremental setting]
    Let $U=\{1,2,\ldots,n\}$ be the universe (e.g.~the set of IPv4 addresses). 
    The state vector $x=(x_1,x_2,\ldots,x_n)\in \N^n$ is initialized as all zeros and gets updated by a stream of pairs in the form of $(v,k)$, where $v\in U$ and $k\in \N$.
    \begin{itemize}
        \item Upon receiving $(v,k)$, $x_v \gets x_v +k$.
    \end{itemize}   
    A streaming \distcount{} algorithm in the incremental setting takes one pass of the stream and estimates $\norm{x}_0=|\{v\in U:x_v\neq 0\}|$ (aka cardinality, $F_0$), i.e., the number of elements that have non-zero frequency. 
\end{definition}
\begin{remark}
    By convention (e.g.~\cite{alon1996space}), we refer to $x_v$ as the \emph{frequency} of $v$ when $x_v$ encodes the number of occurrences of $v$. However, for a general-purpose $x_v$, we will call it the \emph{state} of $v$. 
\end{remark}

In 2003, Cormode, Datar, Indyk, and Muthukrishnan \cite{cormode2003comparing} gave the first \distcount{} algorithm in the \emph{turnstile model}, allowing both insertions and deletions.
\begin{definition}[turnstile model, \cite{cormode2003comparing}, \cite{li2014turnstile}]
    Let $U=\{1,2,\ldots,n\}$ be the universe. 
    The state vector $x=(x_1,x_2,\ldots,x_n)\in \Z^n$ is initialized as all zeros and gets updated by a stream of pairs in the form of $(v,k)$, where $v\in U$ and $k\in \Z$.
    \begin{itemize}
        \item Upon receiving $(v,k)$, $x_v \gets x_v +k$.
    \end{itemize}
    A streaming \distcount{} algorithm in the turnstile model takes one pass of the stream and estimates $\norm{x}_0=|\{v\in U:x_v\neq 0\}|$ (aka $L_0$ or the Hamming norm).

\end{definition}
\begin{remark}
    One may also define the updates as $\mathsf{Insert}(v)$ and $\mathsf{Delete}(v)$ in the turnstile model, which is equivalent to receiving $(v,+1)$ and $(v,-1)$ respectively.
\end{remark}
\begin{remark}
    Algorithms in the turnstile model natively support unions and symmetric differences of streams, since they are often \emph{linear} \cite{li2014turnstile}.
\end{remark}

\paragraph{The key insight.} The application of a streaming model is intimately related to the \emph{algebraic structure} of the update ($k$) and the state ($x_v$). The only difference in the definitions of the incremental setting and the turnstile model is that the former assumes $k,x_v\in \N$, while the latter assumes $k,x_v\in \Z$. The fact that $\Z$ has an additive inverse allows the turnstile model to incorporate deletions natively. With this observation, it is natural to define the following general model as a framework.

\begin{definition}[$M$-turnstile model]\label{def:fturnstile}
Let $(M,+)$ be a commutative monoid. The \emph{$M$-turnstile model} is defined as follows. 
     Let $U=\{1,2,\ldots,n\}$ be the universe. 
    The state vector $x=(x_1,x_2,\ldots,x_n)\in M^n$ is initialized as all zeros (i.e.~the identity of the monoid $M$) and gets updated by a stream of pairs in the form of $(v,k)$, where $v\in U$ and $k\in M$.
    \begin{itemize}
        \item Upon receiving $(v,k)$, $x_v \gets x_v +k$.
    \end{itemize}
    A streaming \distcount{} algorithm in the $M$-turnstile model takes one pass of the stream and estimates $\norm{x}_{0}=|\{v\in U:x_v\neq 0\}|$. 
\end{definition}
\begin{remark}
    With this definition, the incremental setting is just the $\N$-turnstile model and the original turnstile model is the $\Z$-turnstile model. In general, $M$ is the structure of the state of the elements. For example, if every element has a boolean state (on/off), then $M$ is the $\{0,1\}$-field. 
See Table \ref{tab:model_framework} for a list of models defined in the $M$-turnstile framework.    
\end{remark}
\begin{remark}
In order for the state vector $x\in M^n$ to be updated in parallel, the algebraic structure of $M$ must at least be a commutative monoid. Specifically, associativity and commutativity are needed to merge streams unorderedly, and the existence of identity is needed to initialize the values of all elements.  There are similar notions in recent work in the field of automata theory \cite{alur2017automata} and commutative algebra \cite{baez2019network}. 
\end{remark}
\begin{table}[ht]
    \centering
    \begin{tabular}{c|c | c}
       model name & $M$-turnstile framework & note \\
        \hline
       incremental setting & $\N$-turnstile model\\
      (original) turnstile model & $\Z$-turnstile model\\
       $k$-boolean model $\dagger\star$  & $\F_{2^k}$-turnstile model & $k$ is a positive integer\\
       $p$-cyclic model $\dagger\star$  & $\F_{p}$-turnstile model & $p$ is a prime\\
       finite field model $\dagger\star$  & $F$-turnstile model & $F$ is a finite field\\ 
       invertible model $\dagger$ & $G$-turnstile model & $G$ is an abelian group \\ 
       base model $\dagger$ & $M$-turnstile model & $M$ is a commutative monoid
    \end{tabular}
    \caption{Examples of models in the $M$-turnstile framework. $\dagger$: The model is newly defined in this work. $\star$: The \distcount{} problem in the model is solved in this work by $F$-\pcsa{}.}
    \label{tab:model_framework}
\end{table}

In general, it is not yet clear how to generally solve \distcount{} in the $M$-turnstile model for an arbitrary commutative monoid $M$. However, we show that when $M$ is a \emph{finite field} $F$, it is straightforward to solve \distcount{} in the $F$-turnstile model, conceptually no harder than solving \distcount{} in the incremental setting. 

\paragraph{Contribution.}
Concretely, we present a simple streaming \distcount{} algorithm in the $F$-turnstile model (Definition \ref{def:fturnstile}) for any finite field $F$, called $F$-\pcsa{}. We prove that $F$-\pcsa{} gives an unbiased estimate of $\norm{x}_{0;F}$ with $O(1/\sqrt{m})$ relative error (the hidden constant depends on $|F|$, see Table \ref{tab:psi} on page \pageref{tab:psi})  using $(m \log n \cdot \log |F|)$ bits of memory. Let $\widehat{\norm{x}_0}$ be the estimator of $\norm{x}_0$. We say it is \emph{unbiased} if for any $x\in F^n$,
\begin{align*}
    \E \widehat{\norm{x}_0} = \norm{x}_0.
\end{align*}
We say $\widehat{\norm{x}_0}$ has a \emph{relative error} $\epsilon$ if for any $x\in F^n$,
\begin{align*}
    \frac{\sqrt{\var \widehat{\norm{x}_0}}}{\norm{x}_0} = \epsilon.
\end{align*}

In 2023,   Bæk, Pagh and Walzer \cite{baek2023simple} analyzed a set sparse recovery algorithm over $\F_2$. They commented in the introduction that
``Linear sketches
over finite fields are less well-studied [in comparison to sketches over $\Z$ and $\R$], but are natural in some applications.'' \cite{baek2023simple}. We have a similar motivation here for $\F_2$-\pcsa{}. $\F_2$-\pcsa{} can be used to count the number of elements that appear an odd number of times, or the number of 1s in a boolean system. In general, for a prime $p$, $\F_p$-\pcsa{} can be used to count the number of elements whose frequencies are \emph{not} divisible by $p$.

Now consider another series of finite fields. For any positive integer $k$, $\F_{2^k}$-\pcsa{} can be used to estimate the number of non-zero $x_v$s where each $x_v$ is $\F_{2^k}$-valued (thus can be stored as a $k$-bit string). The state $x_v$ can be used to identify a vector of $k$ boolean attributes of the element $v$. Then given a stream of attribute toggles (in the form of ``toggle the $j$th attribute of the element $v$''), $\F_{2^k}$-\pcsa{} estimates the number of elements who have at least one nonzero attribute. 

When the prime $p$ is large, the $\F_p$-turnstile model approximates the original $\Z$-turnstile model. Thus as an application of the generic $F$-\pcsa{} algorithm, 
$\F_p$-\pcsa{} can be used to estimate $L_0$ when $p$ is propertly selected. By different strategies of selecting the prime $p$, it provides the following space complexity upper bounds (assuming a random oracle). To get a $(1\pm \epsilon)$-estimation of $L_0$ with 1/3 error probability, one needs
\begin{itemize}
    \item $O(\epsilon^{-2}\log n \cdot (\log \epsilon^{-1}+\log\log \norm{x}_\infty))$ bits, matching the state of the art $L_0$-estimation algorithm by Kane, Nelson and Woodruff \cite{kane2010optimal};
    \item $O(\epsilon^{-2}\log n \cdot \log \norm{x}_\infty)$ bits, which is useful when the maximum frequency is small;
    \item $O(\epsilon^{-2}\log n \cdot \max(\log \frac{\norm{x}_1}{\norm{x}_0}, \log \epsilon^{-1})) $ bits, which is useful when the average non-zero frequency is low.
\end{itemize}
Note that the upper bounds above can be made sharper by an \emph{$O(\epsilon)$-mass discount}, i.e.~ignoring the top $O(\epsilon)$-fraction of elements with non-zero frequencies. 
The key observation is: The worst thing that can happen in $\F_p$-\pcsa{} when $x_v$ is unusually large for some element $v\in U$ is that $v$ gets missed  in the final estimator (as if $v$ has never appeared in the stream).  It is fine to miss an $O(\epsilon)$-fraction of elements to get a $(1\pm \epsilon)$-estimation of $\norm{x}_0$, which is why in $\F_p$-\pcsa{} one may simply ignore the top $O(\epsilon)$-fraction of elements with high (even unbounded) frequencies.

Not all sketches enjoy this $O(\epsilon)$-mass discount feature. For example, the first $L_0$-estimation algorithm \cite{cormode2003comparing} uses a linear combination of $p$-stable random variables with a small $p$ (e.g.~$p<0.01$) to approximate $L_0$, which is sensitive to sparse high frequencies. To see this, let $x_1$ be the only non-zero entry in the vector $x$. For a fixed $p$, no matter how small $p$ is, the estimator $\widehat{\norm{x}_0}_{\text{$p$-stable}}$ becomes arbitrarily large as $x_1$ increases to infinity, while $\norm{x}_0$ is just $ 1$ by construction. 

\subsection{Notations}
Let $\ind{\cdot}$ be the indicator function. 

Let $U=\{1,2,\ldots,n\}$ and $x=(x_1,x_2,\ldots,x_n)\in \Z^n$. 
\begin{center}
    \begin{tabular}{c|c | c}
       notation  & definition & name \\
       \hline
      $\norm{x}_0$   &  $|\{v\in U\mid x_v\neq 0\}|$ & $L_0$, Hamming norm\\
      $\norm{x}_1$   &  $\sum_{v\in U} |x_v|$ & $L_1$\\
      $\norm{x}_\infty$ & $\max_{v\in U} |x_v|$ & maximum, $L_\infty$
    \end{tabular}
\end{center}

Let $F$ be any finite field. Let $|F|$ be the order of $F$. By default, 0 denotes the additive identity of $F$ and 1 denotes the multiplicative identity of $F$. The definition of $L_0$ can be extended to fields, i.e.~$\norm{x}_0=|\{v\in U\mid x_v\neq 0\}|$ where $x_v$s are $F$-valued and $0$ is the additive identity of $F$. 
When we want to emphasize that $x$ is $F$-valued, we write $L_0$ as $\norm{x}_{0;F}$. For any prime $p$ and positive integer $k$, we denote the field of order $p^k$ as $\F_{p^k}$.

Let $X$ be a real random variable. Let $\E X$ be its expectation and $\var X$ be its variance. Let $Y,Z$ be two generic\footnote{For example, random variables over a finite field.} random variables. We write $Y\sim Z$ if $Y$ and $Z$ are identically distributed.

Let $m$ be a positive integer. $[m]$ is a shorthand for the set $\{1,2,\ldots,m\}$.

\subsection{Preliminaries}
Commutative monoids are the most general algebraic structures to be updated in parallel. 
\begin{definition}[commutative monoids]
Let $M$ be a set and $+:M\times M\to M$ be a binary operator over $M$. Then $M$ is a \emph{commutative monoid} if
\begin{itemize}
    \item $\forall a,b,c\in M$, $(a+b)+c = a+(b+c)$; (associativity)
    \item $\forall a,b \in M$, $a+b = b+a$; (commutativity)
    \item $\exists 0 \in M\,\forall a\in M, a+0=0+a=a$. (existence of identity)
\end{itemize}
\end{definition}

Finite fields are the central objects in this work. Finite fields can be unambiguously identified by their orders up to isomorphism. Thus we write  the finite field of order $q$ as $\F_q$, where $q$ is necessarily $p^k$ for some prime $p$ and some positive integer $k$. 

\subsection{Overview of $F$-\pcsa{}}
\label{sec:overview}
Since $F$-\pcsa{} is similar to \pcsa{} \cite{flajolet1985probabilistic}, we will first recap the definition of \pcsa{}.
\subsubsection{\pcsa{} Sketch}
\begin{definition}[\pcsa{} \cite{flajolet1985probabilistic}]\label{def:pcsa_sa}
Let $U=\{1,2,\ldots,n\}$ be the universe.
The \pcsa{} sketch stores a table of bits, \bitmap$[i,j]$ where $i\in[m]$ and $j\in \N$. For every $(i,j)$, \bitmap$[i,j]$ is initialized to zero. There is a hash function $h:U\to [m]\times \N$ such that for any element $v\in U$, $\pr(h(v)=(i,j))=\frac{1}{m}2^{-j}$ for each $(i,j)$. Upon an insertion of $v\in U$ (or receiving $(v,1)$ in the $\N$-turnstile model), 
\begin{align*}
    \bitmap[h(v)] \gets \bitmap[h(v)] \lor 1.
\end{align*}    
\end{definition}
See Figure \ref{fig:pcsa} for an example of \pcsa{}'s memory snapshot. 

\begin{figure}
    \centering
    \begin{subfigure}[b]{0.3\textwidth}
         \centering
         \includegraphics[width=\textwidth]{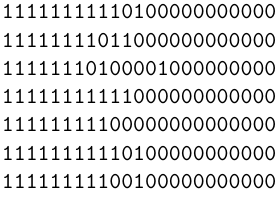}
         \caption{\pcsa{}'s memory}
         \label{fig:pcsa}
    \end{subfigure}
    \hfill
    \begin{subfigure}[b]{0.3\textwidth}      
         \centering
         \includegraphics[width=\textwidth]{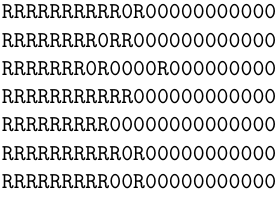}
         \caption{$F$-\pcsa{}'s memory}
         \label{fig:fpcsa}
    \end{subfigure}
    \hfill
    \begin{subfigure}[b]{0.3\textwidth}      
         \centering
         \includegraphics[width=\textwidth]{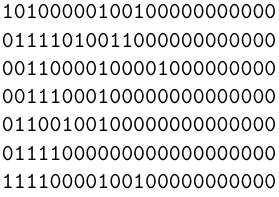}
         \caption{$\F_2$-\pcsa{}'s memory}
         \label{fig:f2pcsa}
    \end{subfigure}
    
    \caption{All three sketches are coupled using the same $h$ hash function. The way that ``\texttt{1}''s appear in \pcsa{}  is statistically the same with the way ``\texttt{R}''s appear in $F$-\pcsa{}. One does observe concrete ``\texttt{1}''s in \pcsa{}'s memory. On the other hand, ``\texttt{R}''s in $F$-\pcsa{} are observed in the memory as independent instances of uniform random varaibles over $F$. As an example, every bit that should have been a ``\texttt{1}'' in \pcsa{} now has a half chance to be a ``\texttt{0}'' in $\F_2$-\pcsa{}.}
    \label{fig:insight}
\end{figure}

Let $Q_i$ be the position of the leftmost zero in the $i$th row. It is observed in \cite{flajolet1985probabilistic} that whenever the cardinality doubles, $Q_i$ increases by 1 on average. The \pcsa{}'s estimator is thus
\begin{align*}    \widehat{\norm{x}_0}_{\text{\pcsa{}}} = \alpha_m 2^{\frac{1}{m}\sum_{i=1}^m Q_i},
\end{align*}
where $\alpha_m$ is some normalization factor.
\paragraph{Middle Range Assumption.}
We assume the cardinality is in the middle range in this work, i.e.~$\norm{x}_0\gg 1$ and $\norm{x}_0 \ll |U|$. In this case, all bits $(i,j)$ with $j\geq \log |U|$ are zeros with high probability. Thus we only need to store bits $(i,j)$ with $j\in [0,\log |U|]$, which requires $(m \log |U|)$ bits of memory. For the case where $\norm{x}_0=\Theta(1)$ or $\norm{x}_0= \Theta(|U|)$, separate estimation methods are needed, which are out of the scope of this paper.

\subsubsection{$F$-\pcsa{} Sketch}\label{sec:fpcsa_overview}
The algebraic structure of finite fields brings us the following properties.
\begin{lemma}\label{lem:idem}
 Let $F$ be a finite field. Then the uniform distribution over $F$ is well-defined.
 
 Let $R_1,R_2,\ldots,R_k$ be i.i.d.~uniform random variables over $F$. Then for any non-zero $a\in F$ and $k\in \Z_+$,
 \begin{align}
     a R_1 \sim R_1\label{eq:field_idem}\\
     \sum_{i=1}^k R_i \sim R_1.\label{eq:idem_measure}
 \end{align}
\end{lemma}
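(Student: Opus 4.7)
The plan is to verify the three statements in order; each reduces to the observation that the uniform distribution on a finite set is preserved by any bijection of that set, so I anticipate no serious obstacle. The only algebraic facts I will invoke are that $(F,+)$ is a group and that every nonzero element of $F$ is invertible.

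First, since $|F|<\infty$, the uniform distribution on $F$ is well-defined by assigning probability $1/|F|$ to each element. For \eqref{eq:field_idem}, because $a\neq 0$ in the field $F$, the element $a$ is invertible, so the map $x\mapsto ax$ is a bijection of $F$ onto itself. Pushing the uniform measure forward through any bijection of a finite set yields the uniform measure, so $aR_1$ is uniform, hence $aR_1\sim R_1$.

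For \eqref{eq:idem_measure}, I would proceed by induction on $k$. The case $k=1$ is tautological. For the inductive step, set $S_{k-1}=\sum_{i=1}^{k-1}R_i$ and assume by hypothesis that $S_{k-1}\sim R_1$; note $S_{k-1}$ is independent of $R_k$. Conditioning on $R_k=b$, we have $S_k = S_{k-1}+b$, which is a translation of a uniform variable by a fixed element. Since $(F,+)$ is a group, $x\mapsto x+b$ is a bijection of $F$, so the conditional distribution of $S_k$ given $R_k=b$ is uniform for every $b\in F$; marginalizing over $R_k$ preserves uniformity. Equivalently, one can compute directly $\pr(S_k=c)=\sum_{b\in F}\pr(R_k=b)\,\pr(S_{k-1}=c-b)=|F|\cdot |F|^{-2}=|F|^{-1}$ for each $c\in F$.

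The entire proof thus rests on a single theme: a finite field supplies two natural families of bijections on $F$ — multiplication by a nonzero scalar and translation by an arbitrary element — and both preserve the uniform measure. This is precisely the algebraic feature that will later allow $F$-\pcsa{} to absorb arbitrary nonzero update coefficients and to collapse many updates landing in the same cell into a single uniform random variable.
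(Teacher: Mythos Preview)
Your proof is correct and follows essentially the same approach as the paper: well-definedness from finiteness, invertibility of nonzero $a$ to show $aR_1$ is uniform, and induction on $k$ via the computation $\pr(S_k=c)=\sum_{b}\pr(R_k=b)\pr(S_{k-1}=c-b)=|F|^{-1}$. Your framing in terms of bijections preserving the uniform measure is a nice conceptual gloss, but the underlying argument is identical to the paper's.
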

\begin{proof}
The uniform distribution over a finite field $F$ is just the normalized counting measure,  which is a well-defined probability measure since $F$ is finite.

    Since $F$ is a field and $a$ is non-zero, there is a multiplicative inverse $a^{-1}$ of $a$. Then we have, for any $b\in F$,
    \begin{align*}
        \pr(a R_1 = b) = \pr(R_1 = ba^{-1}) = \frac{1}{|F|}.
    \end{align*}
    Thus (\ref{eq:field_idem}) is proved. For (\ref{eq:idem_measure}), it suffices to prove $R_1+R_2\sim R_1$ and then use induction. For any $b\in F$,
    \begin{align*}
        \pr(R_1+R_2 = b) &=\sum_{c\in F}\frac{1}{|F|}\pr(R_1 + c= b)
        \intertext{since every element $c$ has an additive inverse $-c$ in $F$,}
        &=\sum_{c\in F}\frac{1}{|F|}\pr(R_1 = b-c) = \sum_{c\in F}\frac{1}{|F|^2} =\frac{1}{|F|},
    \end{align*}    
\end{proof}

The lemma above enables us to count over $F$ in an almost identical way as \pcsa{} counts over $\N$.
\begin{definition}[$F$-\pcsa{}]\label{def:fpcsa_sa}    
Let $U=\{1,2,\ldots,n\}$ be the universe.
The $F$-\pcsa{} sketch stores a table of $F$-values, \fieldmap[$i$,$j$] where $i\in[m]$ and $j\in \N$. For every $(i,j)$, \fieldmap$[i,j]$ is initialized to zero (the additive identity of $F$). There is a hash function $h:U\to [m]\times \N$ such that for any $v\in U$, $\pr(h(v)=(i,j))=\frac{1}{m}2^{-j}$ for each $(i,j)$. There is another hash function $g:U\to F$ such that for any $v\in U$, $g(v)$ is uniformly random over $F$. Upon receiving $(v,k)$ where $v\in U,k\in F$, 
\begin{align*}
    \fieldmap[h(v)] \gets \fieldmap[h(v)] + k\cdot g(v).
\end{align*}    
\end{definition}
Together with Lemma \ref{lem:idem}, it is immediate that, conditioning on the hash function $h$, the entry $\fieldmap[i,j]$ behaves as follows. 
\begin{align*}
    \fieldmap[i,j] &= 0, \quad \text{if $x_v=0$ for all $v\in U$ that $h(v)=(i,j)$}\\
    \fieldmap[i,j] &\sim R, \quad \text{if $x_v\neq 0$ for some $v\in U$ that $h(v)=(i,j)$},
\end{align*}
where $R$ is a uniform random variable over $F$. See Figure \ref{fig:fpcsa} for a diagram of $F$-\pcsa{}'s memory. $F$-\pcsa{} behaves statistically the same with the original \pcsa{} with the twist that ``1''s becomes ``\texttt{R}''s. This is the intuition why it is no harder to count in the $F$-turnstile model than in the $\N$-turnstile model. One needs $\log |F|$ bits per register to store $F$-values and thus $F$-\pcsa{} needs $(m\log |U|\cdot \log |F|)$ bits of memory.

We use the \emph{highest index} (the index of the rightmost non-zero value) estimator for $F$-\pcsa{}. Let $W_i$ be the highest index of non-zero entries in the $i$th row. Then the estimator is defined as
\begin{align*}
\widehat{\norm{x}_{0;F}}_{\text{$F$-\pcsa{}}} = \beta_m 2^{\frac{1}{m}\sum_{i=1}^m W_i},
\end{align*}
where $\beta_m$ is some normalization factor. We prove, under some simplification assumptions listed in Section \ref{sec:assumptions}, in Theorem \ref{thm:fpcsa}  that $\widehat{\norm{x}_{0;F}}_{\text{$F$-\pcsa{}}}$ is unbiased with $O(1/\sqrt{m})$ relative error (the leading constant depends on $|F|$). See Table \ref{tab:psi} for computed normalization factors and relative errors for a list of finite fields.

\begin{table}
    \centering
    \begin{tabular}{c|c|c| c|c }
        $|F|$ & $\psi_\E(|F|)$ & $\psi_\var(|F|)$ &  normalization factor ($\beta_m$)  & relative error  \\
        \hline
2 & -0.1100   &     5.588      & 1.079$m$  &  1.638$/\sqrt{m}$  \\
3 & 0.5487 & 4.321             & 0.6836$m$    &  1.441$/\sqrt{m}$  \\
4 & 0.7903 & 4.002             & 0.5782$m$   &  1.387$/\sqrt{m}$  \\
5 & 0.9173 & 3.861             & 0.5295$m$    &  1.362$/\sqrt{m}$  \\
6 & not a field & not a field  &  not a field   &  not a field  \\
7 & 1.049 & 3.732              & 0.4833$m$    &  1.339$/\sqrt{m}$  \\
8 & 1.088 & 3.697              & 0.4704$m$  & 1.333$/\sqrt{m}$ \\
9 & 1.118 & 3.671              &  0.4607$m$ & 1.328$/\sqrt{m}$ \\
10 & not a field & not a field & not a field  & not a field\\
$2^8$ & 1.326 & 3.512          & 0.3989$m$  & 1.299$/\sqrt{m}$ \\
$2^{32}$ & 1.333 & 3.507       & 0.3969$m$  & 1.298$/\sqrt{m}$\\          
$\infty$ & 1.333 & 3.507       & 0.3969$m$  & 1.298$/\sqrt{m}$
    \end{tabular}
    \caption{Asymptotic constants as $m\to \infty$. By Theorem \ref{thm:fpcsa}, the normalization factor is $2^{-\psi_\E (|F|)}m$ and the relative error is $\sqrt{m^{-1}(\log^2 2)\psi_\var(|F|)}$, where $\psi_\E(|F|)$ and $\psi_\var(|F|)$ are defined in Theorem \ref{thm:psi_asym} (Appendix \ref{app:math}). The values of   $\psi_\E(|F|)$ and $\psi_\var(|F|)$ in the table are obtained through numerical integration. Interestingly,
    as $|F|$ goes to infinity, $F$-\pcsa{} estimator statistically converges to the \textsf{LogLog} estimator \cite{durand2003loglog} and  the relative error  computed here (1.298$/\sqrt{m}$)  matches Durand and Flajolet's computation for the \textsf{LogLog}'s relative error,  which is $m^{-1/2}\sqrt{\frac{1}{12}\log^2 2+\frac{1}{6}\pi^2}\approx 1.29806/\sqrt{m}$.}
    \label{tab:psi}
\end{table}

\subsection{Related Work}
To the best of our knowledge, the problem of \distcount{} in the $F$-turnstile model where $F$ is a generic finite field is new. Thus $\F_{2^k}$-\pcsa{} are interesting new counting algorithms in $k$-boolean systems. $\F_p$-\pcsa{} where $p$ is a small prime is also new to count the number of elements whose frequency is not divisible by $p$.

When $p$ is large, $\F_p$-turnstile goes to $\Z$-turnstile and the \distcount{} problem in the $\F_p$-turnstile model is intimately related to the $L_0$-estimation in the $\Z$-turnstile model. Indeed, finite fields $\F_p$ when $p$ is large are used in the construction of the $L_0$-estimation algorithm proposed by
Kane, Nelson and Woodruff \cite{kane2010optimal} (\knw-$L_0$). They randomly project the vector hashed to each cell to an element in $\F_p$, with some random large prime $p$. When $p$ is large, $\F_p$-\pcsa{} and \knw-$L_0$ store statistically very similar information (a multi-resolution table of  random projections over $\F_p$). However, the estimation methods are different. \knw-$L_0$ first uses a subroutine \textsc{RoughEstimator} to roughly estimates $\norm{x}_0$. From this rough estimate, \knw-$L_0$ chooses one column (see Figure \ref{fig:fpcsa}) with a constant fraction of ``\texttt{R}''s. Then they prove that by choosing a large $p$ randomly, no ``\texttt{R}'' will be a zero \emph{in that column}. Conditioning on this event, the $L_0$-estimation can be solved by their $F_0$-estimation algorithm (in the incremental setting), through balls-and-bins statistics in that column. On the other hand, the estimator of $\F_p$-\pcsa{} inherently takes account of the events 
``$\texttt{R}=\texttt{0}$'' and uses the highest index to give an estimator with exact and sharper error constants. The difference between the two estimation methods becomes more significant when $p$ is small, e.g.~$p=2$, where half of the ``\texttt{1}''s will be ``$\texttt{0}$''s in the table (see Figure \ref{fig:f2pcsa}) and there is no way to condition with high probability on the event that ``$\texttt{R}\neq \texttt{0}$'' for a whole column.

\subsection{Organization}
$F$-\pcsa{} is formally analyzed in
Section \ref{sec:fpcsa}. Applications of $F$-\pcsa{} on $L_0$-estimation are discussed in Section \ref{sec:application}.

\section{Analysis of $F$-\pcsa{}\label{sec:fpcsa}}

\subsection{Simplification Assumptions}\label{sec:assumptions}
\begin{assumption}[random oracle]
    There is a random oracle that provides ideal hash functions $h$ and $g$.
\end{assumption}

\begin{assumption}[poissonization]
Since we only care about the cases where $\norm{x}_0$ is in the middle range (see the Middle Range Assumption in Section \ref{sec:overview}), we may assume the following.
\begin{itemize}
    \item Since every element is hashed to only one table entry (through the hash function $h(\cdot)$), there are negligible negative correlations between table entries. For example, if there is only one non-zero element, then a non-zero entry necessarily implies that all other entries are zeros. However, this correlation vanishes when $\norm{x}_0\gg 1$  and thus we will assume all the entries in \fieldmap{} are \emph{strictly} independent. I.e.~when a new element comes in, instead of being hashed to only one entry, now every entry $(i,j)$ has an independent chance of $\frac{1}{m}2^{-j}$ to be hashed to. 
    \item The probability that one hits the $(i,j)$ entry is $\frac{1}{m}2^{-j}$ which goes to $1-e^{-\frac{1}{m}2^{-j}}$ as $j$ becomes large. Since $\norm{x}_0\gg 1$ and the highest index is likely to be large, we will assume the probability of hitting the $(i,j)$ entry is \emph{exactly} $1-e^{-\frac{1}{m}2^{-j}}$. 
    \item Since $\norm{x}_0$ is assumed to be in the middle range, we assume that the columns are indexed by \emph{integers}, extending infinitely to both ends. E.g.~the probability of hitting the $(i,-3)$ entry is now assumed to be $1-e^{-\frac{2^3}{m}}$.
\end{itemize}
\end{assumption}
\begin{remark}
This assumption is called ``poissonization'' since it essentially assumes that elements are inserted \emph{continuously} into the sketch with rate 1, which is why the entries are independent and the $(i,j)$ entry has a hitting probability $1-e^{-\frac{1}{m}2^{-j}}$. Poissonization doesn't affect the statistics of high-positioned entries when $\norm{x}_0$ becomes mildly large, which can be shown by a standard coupling method. 
\end{remark}

There is another difficulty during the analysis of such estimators: The behavior of the estimator has constant fluctuations which do \emph{not} vanish even when $m$ goes to infinity. See,  for example, the analysis of \pcsa{} \cite{flajolet1985probabilistic} and \textsf{HyperLogLog} \cite{flajolet2007hyperloglog} where every statement has a periodic error term. See Figure 1 of \cite{pettie2021information} for a visualization of the fluctuation. As proposed in \cite{pettie2021information}, these fluctuations can be eliminated elegantly by  \emph{randomly offsetting} the subsketches.
\begin{assumption}[random offsetting]
    We assume that for each $i\in [m]$, there is a random offset $\Theta_i$ uniformly drawn from $[0,1)$ for the $i$th subsketch. Then the $i$th subsketch randomly ignores a $2^{-\Theta_{i}}$-fraction of the elements in the universe. Thus now the probability of hitting the $(i,j)$ entry is $1-e^{-m^{-1}2^{-j-\Theta_i}}$.
\end{assumption}
\begin{remark}
A \pcsa{}-style sketch is inherently multiplicatively periodic in $2$ and therefore one will get the same relative behaviour for $\norm{x}_0 = 2^k$ with different $k\in \N$. However, the sketch will have different behaviours for e.g.~$\norm{x}_0 = 2^k$ and $\norm{x}_0 = 2^{k+0.5}$. The random offsetting assumption adds noise to this process, thus making the sketch now has an ``averaged'' behaviour for any $\norm{x}_0\in \R^+$. Note that this random offsetting trick is \emph{necessary to be actually implemented} if the required error is comparable to the fluctuation magnitudes (usually small, e.g.~$<10^{-6}$ for \pcsa{} \cite{flajolet1985probabilistic}).
\end{remark}

With all the assumptions above, we reconstruct   the $F$-\pcsa{} sketch (Definition \ref{def:fpcsa_sa}) in the following subsection. 
\subsection{Construction of $F$-\pcsa{}}
With the poissonization assumptions, now $F$-\pcsa{} consists of $m$ i.i.d.~subsketches (rows in \fieldmap{}). We will construct one subsketch now.

Let $F$ be any finite field. Let the \emph{memory state} $s=(s_i)_{-\infty}^\infty$ be a bilaterally infinite vector over $F$ (corresponding to one row in \fieldmap). We will refer the state vector $x$ in the $F$-turnstile model (Definition \ref{def:fturnstile}) as the \emph{stream state}. Recall that the stream state is an $n$-vector over $F$ where $n$ is the size of the universe $U$. 

We first draw a random offset $\Theta\in [0,1)$.
Then with the (poissonized) hash functions $h$ and $g$ provided by the random oracle, we generate two tables of random variables $H=(H_{v,i})_{v\in U,i\in\Z}$ and $G=(G_{v,i})_{v\in U,i\in\Z}$, where for each $v\in U$ and $i\in \Z$,
\begin{itemize}
    \item $H_{v,i}$ is a $\{0,1\}$-random variable\footnote{ 0 is the additive identity in $F$; 1 is the multiplicative identity in $F$.} with $\pr(H_{v,i}=0) = e^{-m^{-1}2^{-i-\Theta}}$,
    \item $G_{v,i}$ is a uniform random variable over $F$. 
\end{itemize}
The entries in $H$ and $G$ are all independent. The two tables $H$ and $G$ are independent.

The sketching process can be described as follows. 
\begin{itemize}
    \item Initially, both the memory state $s$ and the stream state $x$ are all zeros.
    \item Upon receiving $(v,k)$ for $v\in U$ and $k\in F$,
    \begin{itemize}
        \item $x_v \gets x_v + k$ and
        \item for each $i\in \Z$
        \begin{itemize}
            \item $s_i \gets s_i + k\cdot H_{v,i} G_{v,i}$. 
        \end{itemize}
    \end{itemize}
\end{itemize}

Note that the process described above faithfully simulates the $F$-\pcsa{} in Definition \ref{def:fpcsa_sa} with the assumptions listed in Section \ref{sec:assumptions}.

We then formalize the intuition in Section \ref{sec:fpcsa_overview}.
\subsection{Characterization of $F$-\pcsa{}'s States}

\begin{theorem}[characterization of $F$-\pcsa{}'s state]\label{thm:char_typcsa_state}
Fix a stream state $x$ and an offset $\theta\in [0,1)$. Let the memory state be $S=(S_i)_{i\in \Z}$.\footnote{We capitalize the memory state $s$ when we emphasize it as a \emph{random vector} for a fixed stream state $x$.} Then
\begin{itemize}
    \item $S_i$s are mutually independent.
    \item $\pr(S_i\neq 0) = (1-e^{-\norm{x}_{0}m^{-1} e^{-i-\theta}}) (1-|F|^{-1})$, for $i\in \Z$.
\end{itemize}
\end{theorem}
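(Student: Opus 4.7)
The plan is to exploit two independent structural features of the construction: the random variables $\{H_{v,i},G_{v,i}\}_{v\in U}$ used to define $S_i$ are disjoint (across $i$) and the random variables $\{G_{v,i}\}$ carry the ``field-uniformity'' needed to invoke Lemma \ref{lem:idem}. I will treat the two bullets separately.

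For mutual independence, I would observe that by the construction in Section~2.2, $S_i = \sum_{v\in U} x_v \cdot H_{v,i} G_{v,i}$ is a deterministic function of the $i$th columns $(H_{v,i})_{v\in U}$ and $(G_{v,i})_{v\in U}$. Since all entries of $H$ and $G$ are mutually independent across both indices, the columns indexed by distinct $i$ are independent, so the $S_i$'s are too. This step is essentially bookkeeping.

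For the probability formula, I would condition on the indicator column $H_{\cdot,i}=(H_{v,i})_{v\in U}$ and define the random ``active set'' $A_i=\{v\in U: x_v\neq 0,\; H_{v,i}=1\}$. Conditioned on $A_i$, we have
\begin{align*}
S_i = \sum_{v\in A_i} x_v \cdot G_{v,i},
\end{align*}
since any $v$ with $x_v=0$ or $H_{v,i}=0$ contributes nothing. If $A_i=\emptyset$ then $S_i=0$. Otherwise each summand $x_v G_{v,i}$ with $v\in A_i$ is, by (\ref{eq:field_idem}) of Lemma~\ref{lem:idem}, uniform over $F$ (since $x_v\neq 0$ and $G_{v,i}$ is uniform), and the summands are independent; hence by (\ref{eq:idem_measure}) the sum $S_i$ is itself uniform over $F$. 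Therefore $\pr(S_i\neq 0 \mid A_i)=\ind{A_i\neq\emptyset}\cdot(1-|F|^{-1})$. Integrating out $A_i$ gives
\begin{align*}
\pr(S_i\neq 0) = \pr(A_i\neq\emptyset)\cdot(1-|F|^{-1}).
\end{align*}

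Finally, since the $H_{v,i}$ are independent across $v$ with $\pr(H_{v,i}=0)=e^{-m^{-1}2^{-i-\theta}}$, the event $\{A_i=\emptyset\}$ is the intersection of the independent events $\{H_{v,i}=0\}$ over the $\norm{x}_0$ indices $v$ with $x_v\neq 0$, giving $\pr(A_i=\emptyset)=e^{-\norm{x}_0 m^{-1} 2^{-i-\theta}}$, which combined with the previous display yields the claimed formula. I anticipate the only subtle step is the conditional-uniformity argument: one must be careful that Lemma~\ref{lem:idem} is invoked with the active set $A_i$ treated as a random (but $G$-measurable-free) index set, so that independence between $A_i$ (determined by $H$) and $(G_{v,i})_{v\in U}$ makes the application legitimate.
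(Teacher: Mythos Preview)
Your proof is correct and follows essentially the same route as the paper: both express $S_i=\sum_{v\in U} x_v H_{v,i}G_{v,i}$, deduce independence across $i$ from the column-wise independence of $H$ and $G$, and then invoke Lemma~\ref{lem:idem} to reduce $\pr(S_i\neq 0)$ to $\pr(\exists v:\,x_vH_{v,i}\neq 0)\cdot(1-|F|^{-1})$, with the first factor computed from the independent Bernoullis $H_{v,i}$. Your explicit ``active set'' $A_i$ and the conditioning remark are a slightly more careful packaging of the same argument; note also that your final exponent $2^{-i-\theta}$ matches the paper's proof (the $e^{-i-\theta}$ in the displayed statement is a typo).
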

\begin{proof}
By construction, the memory state $S$ and the stream state $x$ are linearly related:
\begin{align}
    S_i = \sum_{v\in U} x_v H_{v,i}G_{v,i}.\label{eq:six}
\end{align} Consequently, $S_i$s are independent due to the independence of entries of $H$ and $G$.

Recall that $G_{v,i}$s are i.i.d.~uniform random variables over $F$. By Lemma \ref{lem:idem}, we have
\begin{align*}
    S_i &= \sum_{v\in U} x_v H_{v,i}G_{v,i} = \sum_{v\in U:  x_v H_{v,i}\neq 0}  x_v H_{v,i}G_{v,i}\\
    &\sim \ind{\exists v, x_v H_{v,i}\neq 0} R_i,
\end{align*}
where $R_i$ is uniformly random over $F$. Thus we know
\begin{align}
    \pr(S_i\neq 0)  &= \pr (\exists v, x_v H_{v,i}\neq 0)\pr(R_i\neq 0)\nonumber\\
    &=(1-\pr (\forall v, x_v H_{v,i}= 0))(1-|F|^{-1})\nonumber\\
    &=(1-\prod_{v\in U}\pr(x_v H_{v,i}= 0))(1-|F|^{-1})\label{eq:s_inter}
\end{align}

Recall that $H_{v,i}$ is a binary random variable with probability $e^{-2^{-i/m-\theta}}$ of being 0. Since $F$ is a field, $ x_v H_{v,i}=0$ if and only if $(H_{v,i}=0 \lor x_v = 0)$. Thus
\begin{align*}
    \pr(x_vH_{v,i}=0) = \begin{cases}
        1, &x_v = 0\\
       e^{-m^{-1}2^{-i-\theta}}, &x_v \neq 0
    \end{cases},
\end{align*}
and then by (\ref{eq:s_inter}),
\begin{align}
    \pr(S_i\neq 0) &= (1-e^{-\norm{x}_0m^{-1} 2^{-i-\theta}}) (1-|F|^{-1}).\nonumber
\end{align}

\end{proof}

With the characterization theorem, we now analyze the estimator of $F$-\pcsa{}.
\subsection{Estimation of $F$-\pcsa{}}
The estimation is based on the highest non-zero position. The analysis is overall straightforward except for a technical asymptotic analysis included in the appendix. 
\begin{definition}[highest index/position]
Let $S$ be the memory state and $\Theta$ be the random offset. Then the \emph{highest index} $B$ is defined as
\begin{align*}
    B = \sup\{i\in \Z\mid S_i\neq 0\}
\end{align*}
and the \emph{highest position}\footnote{The highest position is just the smoothed version of the highest index.} is defined as
\begin{align*}
    W = B + \Theta.
\end{align*}
\end{definition}
\begin{definition}\label{def:nu}
Let $\nu:\R \times [0,1]\to \R$ be a function. For $z\in \R$ and $r\in [0,1]$,
    \begin{align*}
    \nu(z,r) &= (1-e^{- 2^{-z}}) r  \prod_{j=1}^\infty \left(1- (1-e^{- 2^{-z-j}}) r\right) 
    \end{align*}
\end{definition}
\begin{lemma}[p.d.f.~of $W$]\label{lem:density}
    Let $f_W$ be the probability density function of $W$. Then
    \begin{align}
        f_W(z) &= \nu(z-\log_2(\norm{x}_0m^{-1}),1-|F|^{-1}). \label{eq:density}
    \end{align}
\end{lemma}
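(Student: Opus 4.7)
The plan is to compute the distribution of $B$ conditional on $\Theta$ directly from Theorem \ref{thm:char_typcsa_state}, and then use the fact that $\Theta\sim \mathrm{Uniform}[0,1)$ is independent of the rest to convert the joint law of $(B,\Theta)$ into the density of $W=B+\Theta$.

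\textbf{Step 1 (well-definedness of $B$).} First I would verify that $B$ is almost surely finite, so that the $\sup$ is attained. Conditional on $\Theta=\theta$, Theorem \ref{thm:char_typcsa_state} says $\{S_i\neq 0\}_{i\in\Z}$ is a family of independent Bernoulli events with
\[
p_i(\theta) \;=\; \bigl(1-e^{-\|x\|_0 m^{-1} 2^{-i-\theta}}\bigr)(1-|F|^{-1}).
\]
As $i\to+\infty$, $p_i(\theta)\sim \|x\|_0 m^{-1} 2^{-i-\theta}(1-|F|^{-1})$, which is summable; Borel--Cantelli then gives $\Pr(S_i\neq 0\text{ i.o.}\ i\to\infty)=0$, so $B<\infty$ almost surely. (For $i\to -\infty$, $p_i(\theta)\to 1-|F|^{-1}>0$, so the set $\{i:S_i\neq 0\}$ is a.s.\ non-empty.)

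\textbf{Step 2 (conditional law of $B$).} For any integer $b$ and any $\theta\in[0,1)$, independence gives
\[
\Pr(B=b\mid\Theta=\theta) \;=\; p_b(\theta)\prod_{j=1}^{\infty}\bigl(1-p_{b+j}(\theta)\bigr),
\]
where the infinite product converges since $\sum_{j\geq 1}p_{b+j}(\theta)<\infty$ by the estimate above.

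\textbf{Step 3 (change of variables to $\nu$).} Set $z = b+\theta-\log_2(\|x\|_0 m^{-1})$. Then
\[
\|x\|_0 m^{-1} 2^{-b-j-\theta} \;=\; 2^{-z-j} \qquad (j\geq 0),
\]
so substituting into Step 2 yields
\[
\Pr(B=b\mid\Theta=\theta) \;=\; \bigl(1-e^{-2^{-z}}\bigr)r\prod_{j=1}^{\infty}\Bigl(1-\bigl(1-e^{-2^{-z-j}}\bigr)r\Bigr) \;=\; \nu(z,\,r),
\]
with $r=1-|F|^{-1}$.

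\textbf{Step 4 (density of $W$).} Because $B$ is integer-valued and $\Theta$ has Lebesgue density $1$ on $[0,1)$ independent of $(H,G)$, for any real $z$ the map $W=B+\Theta$ identifies $B=\lfloor z\rfloor$ and $\Theta=\{z\}$ uniquely. Hence
\[
f_W(z) \;=\; \Pr\bigl(B=\lfloor z\rfloor\bigm|\Theta=\{z\}\bigr)\cdot f_\Theta(\{z\}) \;=\; \Pr\bigl(B=\lfloor z\rfloor\bigm|\Theta=\{z\}\bigr).
\]
Applying Step 3 with $b=\lfloor z\rfloor$ and $\theta=\{z\}$ gives $z' = z-\log_2(\|x\|_0 m^{-1})$ and therefore $f_W(z)=\nu\bigl(z-\log_2(\|x\|_0 m^{-1}),\,1-|F|^{-1}\bigr)$, as claimed.

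The main obstacle is really just Step 1: making sure $B$ is a.s.\ finite and that the infinite product in Step 2 is well-defined; once that is nailed down, Steps 2--4 are a direct unwinding of the definitions together with the change of variables $z=b+\theta-\log_2(\|x\|_0/m)$ that collapses the cell-hitting probabilities into the universal function $\nu$.
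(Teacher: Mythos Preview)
Your proof is correct and follows essentially the same route as the paper: compute $\Pr(B=b\mid\Theta=\theta)$ from the independence in Theorem~\ref{thm:char_typcsa_state}, recognize it as $\nu(b+\theta-\log_2(\|x\|_0/m),\,1-|F|^{-1})$, and then pass to the density of $W=B+\Theta$. The only differences are cosmetic: the paper obtains the density by integrating to get the CDF of $W$ and differentiating, whereas you use the bijection $z\mapsto(\lfloor z\rfloor,\{z\})$ to read off $f_W$ directly from the joint law; and you add a Borel--Cantelli check (Step~1) for the a.s.\ finiteness of $B$ that the paper leaves implicit.
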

\begin{proof}    
For $i\in \Z$, by the characterization theorem (Theorem \ref{thm:char_typcsa_state}),
\begin{align*}
    \pr( B \leq i |\Theta =\theta) = \prod_{j=i+1}^\infty \pr(S_j= 0)= \prod_{j=i+1}^\infty \left(1- (1-e^{-\norm{x}_0m^{-1} 2^{-j-\theta}}) (1-|F|^{-1})\right)
\end{align*}
Thus
\begin{align}
    &\pr( B = i |\Theta =\theta)\nonumber\\
    &= \pr( B \leq i+1 |\Theta =\theta) - \pr( B \leq i |\Theta =\theta)\nonumber\\
    &=(1-e^{-\norm{x}_0m^{-1} 2^{-i-\theta}}) (1-|F|^{-1}) \prod_{j=i+1}^\infty \left(1- (1-e^{-\norm{x}_0m^{-1} 2^{-j-\theta}}) (1-|F|^{-1})\right) \nonumber  \\
    &= \nu(i+\theta-\log_2(\norm{x}_0m^{-1}),  1-|F|^{-1}) \label{eq:b=i}.
\end{align}
Then we have
\begin{align*}
    \int_{-\infty}^{z_0}f_W(z)dz &= \pr(W\leq z_0) = \pr(B+\Theta\leq z_0) = \int_0^1 \pr(B+\theta\leq z_0|\Theta=\theta)\,d\theta\\
    &= \int_0^1 \sum_{i=-\infty}^\infty \ind{i+\theta \leq z_0}\pr( B = i |\Theta =\theta) \,d\theta
    \intertext{by (\ref{eq:b=i}),}
    &=\int_0^1 \sum_{i=-\infty}^\infty \ind{i+\theta \leq z_0}\nu(i+\theta-\log(\norm{x}_0m^{-1}),  1-|F|^{-1}) \,d\theta\\
    &=\int_{-\infty}^\infty \ind{z \leq z_0}\nu(z-\log_2(\norm{x}_0m^{-1}),  1-|F|^{-1}) \,dz\\
    &=\int_{-\infty}^{z_0} \nu(z-\log_2(\norm{x}_0m^{-1}),  1-|F|^{-1}) \,dz.
\end{align*}
Differentiate both sides by $z_0$ and we get (\ref{eq:density}).
\end{proof}

\begin{definition}\label{def:phi}
    Let $\phi:(0,1)\times [0,1]\to \R$ be a function. For $t\in(0,1)$ and $r\in [0,1]$,
    \begin{align*}
        \phi(t,r) = \int_{-\infty}^\infty 2^{tz}  \nu(z,r)   \,dz.
    \end{align*}
\end{definition}

\begin{lemma}[base-2 moment generating function of $W$]\label{lem:mgf}
For $t\in(0,1)$ and $|F|\geq 2$,
    \begin{align*}
        \E 2^{tW} =\norm{x}_0^t m^{-t}\phi(t,1-|F|^{-1}).
    \end{align*}
\end{lemma}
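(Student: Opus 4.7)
The plan is to directly compute $\E 2^{tW}$ by integrating against the density $f_W$ obtained in Lemma \ref{lem:density}, and then reduce the resulting integral to the definition of $\phi$ by a simple change of variables.

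First I would write
\begin{align*}
\E 2^{tW} = \int_{-\infty}^\infty 2^{tz}\,f_W(z)\,dz = \int_{-\infty}^\infty 2^{tz}\,\nu\bigl(z-\log_2(\norm{x}_0 m^{-1}),\,1-|F|^{-1}\bigr)\,dz,
\end{align*}
using (\ref{eq:density}). Then I would substitute $u=z-\log_2(\norm{x}_0 m^{-1})$, giving $2^{tz}=2^{tu}\cdot (\norm{x}_0 m^{-1})^t$, and pull the constant factor outside:
\begin{align*}
\E 2^{tW} = (\norm{x}_0 m^{-1})^t \int_{-\infty}^\infty 2^{tu}\,\nu(u,\,1-|F|^{-1})\,du = \norm{x}_0^t m^{-t}\,\phi(t,\,1-|F|^{-1})
\end{align*}
by Definition \ref{def:phi}. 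That is the whole algebraic content.

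The one thing that deserves a sentence of justification is that the integral defining $\phi(t, 1-|F|^{-1})$ is finite for $t\in(0,1)$, so that the change of variables and the application of Fubini (implicit in interchanging the product with the integral of $2^{tz}$) are legitimate. As $z\to+\infty$, the leading factor $1-e^{-2^{-z}}\sim 2^{-z}\ln 2$ and the infinite product tends to $1$, so $\nu(z,r)=\Theta(2^{-z})$ and $2^{tz}\nu(z,r)=\Theta(2^{(t-1)z})$, which is integrable since $t<1$. As $z\to-\infty$, every factor $1-(1-e^{-2^{-z-j}})r$ tends to $1-r<1$ (since $|F|\geq 2$ forces $r=1-|F|^{-1}\in[1/2,1)$), so the product collapses doubly-exponentially to zero and dominates $2^{tz}$ for any $t>0$. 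Hence the integral converges absolutely, the change of variables is valid, and the identity follows. I do not anticipate any real obstacle here; the lemma is essentially a restatement of Lemma \ref{lem:density} in terms of the base-2 moment generating function, and the substance of the analysis has already been absorbed into the characterization theorem and the density computation.
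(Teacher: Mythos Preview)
Your proposal is correct and follows exactly the same approach as the paper: integrate $2^{tz}$ against the density from Lemma~\ref{lem:density}, substitute $u=z-\log_2(\norm{x}_0 m^{-1})$, and invoke Definition~\ref{def:phi}. The extra paragraph justifying absolute convergence of the integral is a nice addition that the paper omits at this point (it is implicitly handled later via Lemma~\ref{lem:sub_exponential}).
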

\begin{proof}
By Lemma \ref{lem:density},
\begin{align*}
    \E 2^{tW} = \int_{-\infty}^\infty f_W(z)2^{tz}\,dz &= \int_{-\infty}^\infty \nu(z-\log_2(\norm{x}_0m^{-1}),1-|F|^{-1})2^{tz}\,dz
    \intertext{set $s = z-\log_2(\norm{x}_0 m^{-1})$}
    &=\norm{x}_0^tm^{-t}\int_{-\infty}^\infty \nu(s,1-|F|^{-1})2^{ts}\,ds
    \intertext{by the definition of $\phi(\cdot,\cdot)$ (Definition \ref{def:phi})}
    &=\norm{x}_0^tm^{-t} \phi(t,1-|F|^{-1}).
\end{align*}

\end{proof}

\begin{theorem}[$F$-\pcsa{}]\label{thm:fpcsa}
    Let $F$ be any finite field and $m\geq 3$ be the number of i.i.d.~sub\-sketches. Let $W_1,W_2,\ldots,W_m$ be the highest positions of the subsketches. Then the $F$-\pcsa{} estimator is defined as
    \begin{align*}
        \widehat{\norm{x}_0} =\phi(1/m, 1-|F|^{-1})^{-m}m 2^{\frac{1}{m}\sum_{i=1}^m W_i}.
    \end{align*}
    The estimator is unbiased,
    \begin{align*}
        \E \widehat{\norm{x}_0} = \norm{x}_0,
    \end{align*}    
    and has variance
    \begin{align*}
        \var \widehat{\norm{x}_0} &=  (\phi(1/m, 1-|F|^{-1})^{-2m} \phi(2/m, 1-|F|^{-1})^{m}-1) \norm{x}_0^2.
    \end{align*}
    When $m$ is large,
    \begin{align*}
    \widehat{\norm{x}_0} &= (2^{-\psi_\E(|F|)}+O(m^{-1}))m 2^{\frac{1}{m}\sum_{i=1}^m W_i}\\
        \var \widehat{\norm{x}_0} &= \norm{x}_0^2 m^{-1} \left((\log^2 2)\psi_\var(|F|)+ O(m^{-1})\right),
    \end{align*}
    where $\psi_\E$ and $\psi_\var$ are defined in Theorem \ref{thm:psi_asym} in Appendix \ref{app:math}. The relative error is bounded
    \begin{align}
        \sqrt{(\log^2 2)\psi_\var(|F|)} < \sqrt{79},\label{eq:rough_bound}
    \end{align}
    for all finite fields $F$. 
\end{theorem}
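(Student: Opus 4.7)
The plan is to reduce every claim in the theorem to Lemma \ref{lem:mgf}, exploiting that $W_1,\ldots,W_m$ are i.i.d.\ (since the subsketches are independent by Poissonization) and that the estimator has the multiplicative form $C_m \cdot 2^{\bar W}$ with $\bar W = \frac{1}{m}\sum_i W_i$. Under this form, the moments factorize into products of single-variable base-2 moment generating functions, each of which is known in closed form.

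For unbiasedness, I would write $\E\widehat{\norm{x}_0} = \phi(1/m,r)^{-m}\, m \prod_{i=1}^m \E 2^{W_i/m}$ with $r = 1 - |F|^{-1}$, and apply Lemma \ref{lem:mgf} with $t=1/m \in (0,1)$ to each factor, obtaining $\norm{x}_0^{1/m} m^{-1/m} \phi(1/m,r)$. The product telescopes exactly into $\norm{x}_0$, with the $m$ and $m^{-1}$ cancelling and the $\phi$ powers killing the normalization. For the variance, the same factorization at $t=2/m$ (valid precisely because $m\geq 3$ gives $2/m<1$, the hypothesis Lemma \ref{lem:mgf} needs) produces $\E\widehat{\norm{x}_0}^2 = \phi(1/m,r)^{-2m}\phi(2/m,r)^m \norm{x}_0^2$, and subtracting $(\E\widehat{\norm{x}_0})^2 = \norm{x}_0^2$ yields the stated formula.

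For the large-$m$ asymptotics, I would invoke Theorem \ref{thm:psi_asym} in Appendix \ref{app:math}, which (judging from Table \ref{tab:psi}) provides a Taylor expansion of $\log_2 \phi(t, 1-|F|^{-1})$ around $t=0$ of the form $t\,\psi_\E(|F|) + \tfrac{t^2 \log 2}{2} \psi_\var(|F|) + O(t^3)$. Substituting $t=1/m$ gives $\phi(1/m,r)^{-m} = 2^{-\psi_\E(|F|)}(1+O(m^{-1}))$, recovering the normalization factor. For the variance, the combination $\phi(1/m,r)^{-2m}\phi(2/m,r)^m$ is designed so that the linear-in-$t$ terms in the expansion cancel between the two groupings ($-2m \cdot (1/m) + m \cdot (2/m) = 0$); what survives is exactly $(\log^2 2) \psi_\var(|F|)/m + O(m^{-2})$, matching the claim.

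The main obstacle is the universal bound $\sqrt{(\log^2 2)\psi_\var(|F|)} < \sqrt{79}$, which needs to hold uniformly over all finite fields $F$, i.e.\ for all parameter values $r = 1 - |F|^{-1} \in [1/2, 1)$. Table \ref{tab:psi} suggests $\psi_\var(|F|)$ is largest at $|F|=2$ (value $\approx 5.59$), then decreases monotonically to its $|F|=\infty$ limit $\approx 3.51$, so any reasonable uniform bound suffices; the generous constant $79$ indicates the author is content with a very crude estimate. I would derive $\psi_\var$ explicitly from Theorem \ref{thm:psi_asym} as an integral against $\nu(z,r)$, then bound the integrand uniformly in $r\in[1/2,1)$ using elementary estimates on $\nu$ (e.g.\ bounding the tail product by $1$ on one side and the prefactor $(1-e^{-2^{-z}})r$ by $\min(2^{-z}, 1)$). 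This gives a finite absolute bound on $\psi_\var$ independent of $|F|$; multiplying by $\log^2 2 \approx 0.48$ and loosening to $79$ finishes the theorem.
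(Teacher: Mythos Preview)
Your proposal is correct and matches the paper's proof almost exactly: factor the moment via independence, apply Lemma~\ref{lem:mgf} at $t=1/m$ and $t=2/m$ (your observation that $m\geq 3$ is precisely what makes $2/m\in(0,1)$ is the right reading of the hypothesis), and defer the large-$m$ asymptotics and the uniform bound to Theorem~\ref{thm:psi_asym}. The one place you diverge slightly is the route to $(\log^2 2)\psi_\var(|F|)<79$: the paper does not bound the integral of $\nu$ directly but instead shows $X_r$ is sub-exponential ($\E 2^{|X_r|/6}<4$), extracts a crude moment bound $\E X_r^2 \leq 4\cdot 3.2^2\cdot 2^2$, and multiplies by $\log^2 2$; your proposed elementary pointwise bounds on $\nu(z,r)$ would also work and are arguably more transparent, while the paper's sub-exponential route has the side benefit of simultaneously controlling all moments, which it needs anyway for the $O(t^3)$ remainder in the cumulant expansion.
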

\begin{remark}
    The rough bound (\ref{eq:rough_bound}) is only to show the relative error is $O(1)$ regardless of $F$. The actual relative errors are much smaller. 
    See Table \ref{tab:psi} on page \pageref{tab:psi} for some computed values of relative errors.
\end{remark}
\begin{proof}
Since $W_i$s are independent, we have
    \begin{align*}
        \E \widehat{\norm{x}_0} &= \phi(1/m, 1-|F|^{-1})^{-m} m\E 2^{\frac{1}{m}\sum_{i=1}^m W_i}\\
        &=\phi(1/m, 1-|F|^{-1})^{-m} m\left(\E 2^{\frac{1}{m} W_1}\right)^m
        \intertext{by Lemma \ref{lem:mgf},}
        &=\phi(1/m, 1-|F|^{-1})^{-m} m\left(\norm{x}_0^{1/m}m^{-1/m}\phi(1/m,1-|F|^{-1})\right)^m\\
        &= \norm{x}_0.
    \end{align*}
Thus $ \widehat{\norm{x}_0}$ is unbiased.
Similarly, 
\begin{align*}
    \E \widehat{\norm{x}_0}^2 &=\phi(1/m, 1-|F|^{-1})^{-2m}m^2 \left(\E 2^{\frac{2}{m} W_1}\right)^{m} \\
    &= \phi(1/m, 1-|F|^{-1})^{-2m} m^2  \left( \norm{x}_0^{2/m}m^{-2/m}\phi(2/m,1-|F|^{-1})\right)^{m}\\
    &= \norm{x}_0^{2}\phi(1/m, 1-|F|^{-1})^{-2m}   \phi(2/m,1-|F|^{-1})^{m}.
\end{align*}
The variance is obtained by the formula $\var \widehat{\norm{x}_0} = \E \widehat{\norm{x}_0}^2 - (\E \widehat{\norm{x}_0} )^2$.

The asymptotic case when $m$ is large is proved in Theorem \ref{thm:psi_asym} in Appendix \ref{app:math}.

\end{proof}

\section{Applications to $L_0$-Estimation}\label{sec:application}
First note that an unbiased algorithm with relative error $O(\epsilon)$ implies a $(1\pm \epsilon)$-estimation algorithm with constant error probability by Chebyshev's inequality. Thus $F$-\pcsa{} gives an $(1\pm \epsilon)$-estimation of $\norm{x}_{0;F}$ with a constant error probability in the $F$-turnstile model using $O(\epsilon^{-2}\log n \cdot \log |F|)$ bits of memory.
\subsection{A Matching Upper Bound with Kane, Nelson and Woodruff's $L_0$-Estimation Algorithm}
We will prove that it is possible to achieve the same space complexity of \knw-$L_0$ (\cite{kane2010optimal}) with $\F_p$-\pcsa{} using a similar $p$-selection trick. Instead of bounding the probabilities of  all the ``bad events'' as in \cite{kane2010optimal}, for $\F_p$-\pcsa{} we only need to estimate the difference between $\norm{x}_{0}$ and $\norm{x}_{0;\F_p}$, where $\norm{x}_{0}$ is the quantity we \emph{want} to estimate and $\norm{x}_{0;\F_p}$ is the quantity we \emph{can} estimate accurately by $\F_p$-\pcsa{}.
\begin{theorem}\label{thm:random_p}
Let $\epsilon>0$ be the desired error. Let $P$ be a random prime uniformly drawn from the set $S$ of the smallest $\frac{10\log \norm{x}_\infty}{\epsilon}$ primes. Then for any $x\in \Z^n$
\begin{align*}
\norm{x}_{0} \geq \norm{x}_{0;\F_P} , \quad \text{and}\quad 
    \pr(\norm{x}_{0}-\norm{x}_{0;\F_P}>\epsilon \norm{x}_0) \leq \frac{1}{10}.
\end{align*}
\end{theorem}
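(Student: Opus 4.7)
The first inequality is immediate: for any $v$, if $x_v = 0$ in $\Z$ then $x_v \equiv 0 \pmod{P}$, so reducing modulo $P$ can only convert nonzero coordinates to zero, never the reverse. Hence $\|x\|_{0;\F_P} \leq \|x\|_0$ pointwise (i.e., for every realization of $P$).

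For the second inequality, I would bound the ``loss''
\begin{align*}
L \;:=\; \|x\|_0 - \|x\|_{0;\F_P} \;=\; |\{v \in U : x_v \neq 0 \text{ in } \Z,\; P \mid x_v\}|
\end{align*}
in expectation over $P$, then apply Markov. The key number-theoretic input is that any nonzero integer $a$ with $|a| \leq \|x\|_\infty$ has at most $\log_2 \|x\|_\infty$ distinct prime divisors, since each prime divisor is at least $2$. In particular, for each $v$ with $x_v \neq 0$, the number of primes in $S$ that divide $x_v$ is at most $\log_2 \|x\|_\infty$, so
\begin{align*}
\pr(P \mid x_v) \;\leq\; \frac{\log_2 \|x\|_\infty}{|S|} \;=\; \frac{\epsilon \log_2 \|x\|_\infty}{10 \log \|x\|_\infty} \;\leq\; \frac{\epsilon}{10},
\end{align*}
where the last step uses the convention that $\log$ denotes $\log_2$ (otherwise one absorbs a $1/\ln 2$ factor into a slightly larger constant in the definition of $|S|$).

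Summing over the $\|x\|_0$ coordinates with $x_v \neq 0$ yields $\E L \leq \frac{\epsilon}{10}\|x\|_0$, and Markov's inequality then gives
\begin{align*}
\pr(L > \epsilon \|x\|_0) \;\leq\; \frac{\E L}{\epsilon \|x\|_0} \;\leq\; \frac{1}{10}.
\end{align*}
There is no serious obstacle here; the only mild subtlety is tracking the base of the logarithm so that the bound $\log_2 \|x\|_\infty$ on the number of prime divisors matches the $\log \|x\|_\infty$ in the definition of $|S|$, which at worst absorbs a $\log 2$ factor into the leading constant $10$.
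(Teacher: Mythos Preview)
Your proof is correct and is essentially the same argument as the paper's: the paper phrases the double-counting as a balls-and-bins/pigeonhole argument (each nonzero $x_v$ drops at most $\log \|x\|_\infty$ balls into bins indexed by primes, so at most $\log \|x\|_\infty/\epsilon$ bins can have $\geq \epsilon\|x\|_0$ balls), while you phrase it as linearity of expectation followed by Markov---these are the same averaging step presented in two equivalent ways. Your remark about the logarithm base is apt; the paper's ``at most $\log \|x\|_\infty$ prime factors'' implicitly uses base~2 as well.
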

\begin{proof}
    Since for any $a\in \Z$, $a=0$ implies $a=0\mod P$, we have $\norm{x}_{0} \geq \norm{x}_{0;\F_P}$. 

 Imagine every element is placing balls in a set of bins indexed by $S$. Each element $v\in U$ will place a ball in the bin indexed by $p\in S$ if $p\mid x_v$. By randomly picking a prime $P$ from the set $S$, $\norm{x}_0-\norm{x}_{0;\F_P}$ is equal to the number of element $v$ with $P\mid x_v$, which is equal to the number of balls in the bin indexed by $P$. 
    Note that for every element $v\in U$, $x_v$ has at most $\log \norm{x}_\infty$ distinct prime  factors. Thus the number of balls is at most $\norm{x}_0\log \norm{x}_\infty$. As a result, the number of bins with at least $\epsilon\norm{x_0}$ balls is at most $\log \norm{x}_\infty/\epsilon$ (otherwise we run out of balls). We conclude that
    \begin{align*}
        \pr(\norm{x}_{0}-\norm{x}_{0;\F_P}>\epsilon \norm{x}_0) &\leq \frac{\text{number of bins with at least $\epsilon \norm{x}_0$ balls}}{\text{number of bins}}\\
        &\leq \frac{\log \norm{x}_\infty/\epsilon}{{10\log \norm{x}_\infty}/{\epsilon}}=1/10.
    \end{align*}
\end{proof}
\begin{remark}
    In \knw-$L_0$, a random prime is picked in the range $[O(\frac{\log \norm{x}_\infty}{\epsilon}),O((\frac{\log \norm{x}_\infty}{\epsilon})^3)]$. In contrast, we only need to pick a random prime in the range $[2,\tilde{O}(\frac{\log \norm{x}_\infty}{\epsilon})]$ for $\F_p$-\pcsa{}.
\end{remark}
To store a prime in Theorem \ref{thm:random_p}, one needs $(\log \epsilon^{-1}+\log\log \norm{x}_\infty + O(1))$ bits of memory. Therefore, by choosing a random $p$ with the scheme in Theorem \ref{thm:random_p}, $\F_p$-\pcsa{} gets the same space complexity $O(\epsilon^{-2}\log n(\log \epsilon^{-1}+\log\log \norm{x}_\infty))$ with \knw-$L_0$, to give a $(1\pm \epsilon)$-estimation of $L_0$ with a constant error probability, but with a much smaller hidden constant.

\subsection{New Space Complexity Upper Bounds for $L_0$-Estimation}
\subsubsection{$O(\epsilon^{-2}\log n\cdot \log \norm{x}_\infty )$}
Let $p$ be a prime number.
Trivially, if $\norm{x}_\infty < p$ then $\norm{x}_{0}=\norm{x}_{0;\F_p}$. Therefore by choosing $p$ of size $\Tilde{O}(\norm{x}_\infty)$, $\norm{x}_{0}$ is perfectly estimated by $\F_p$-\pcsa{}, which uses $O(\epsilon^{-2}\log n \log \norm{x}_\infty )$ space. Most notably, if the system is boolean ($x_v$s toggle between 0 and 1) or the system's insertions and deletions are interleaved, then $\norm{x}_\infty<2$. In such case, $\F_2$-\pcsa{} estimates $\norm{x}_0$ perfectly with only $O(\epsilon^{-2}\log n)$ bits of memory.

\subsubsection{$O(\epsilon^{-2}\log n \cdot \max (\log(\frac{\norm{x}_1}{\norm{x}_0}),  \log(\epsilon^{-1})))$}
Observe the following.

\begin{theorem}\label{thm:avg_p}
Let $\epsilon>0$ be the desired error.
Let $S=\{p_1,p_2,\ldots,p_K\}$ be a set of $K$ distinct primes such that $\min S \geq \norm{x}_1/\norm{x}_0$ and $K\geq 10/\epsilon$. Let $P$ be a random prime uniformly drawn from S. Then for any $x\in \Z^n$,
\begin{align*}
\norm{x}_{0} \geq \norm{x}_{0;\F_P} , \quad \text{and}\quad 
    \pr(\norm{x}_{0}-\norm{x}_{0;\F_P}>\epsilon \norm{x}_0) \leq \frac{1}{10}.
\end{align*}
\end{theorem}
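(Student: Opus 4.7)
The plan is to mimic the balls-and-bins argument used for Theorem \ref{thm:random_p}. I view each $v\in U$ with $x_v\neq 0$ as placing one ball in bin $p\in S$ for every prime $p\in S$ with $p\mid x_v$. The first inequality $\norm{x}_0\geq \norm{x}_{0;\F_P}$ is immediate, since any integer equal to $0$ in $\Z$ is also $0\bmod P$; and the bad event $\norm{x}_0-\norm{x}_{0;\F_P}>\epsilon\norm{x}_0$ is precisely the event that bin $P$ contains more than $\epsilon\norm{x}_0$ balls.

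The only nontrivial step, and the one that differs from Theorem \ref{thm:random_p}, is bounding the \emph{total} number of balls across all bins by $\norm{x}_0$. For each $v$ with $x_v\neq 0$, let $m_v$ be the number of primes in $S$ dividing $x_v$. Since these primes are distinct and each is at least $\min S$, their product divides $|x_v|$, and so $(\min S)^{m_v}\leq |x_v|$, i.e.\ $m_v\leq (\log|x_v|)/\log(\min S)$. I then plan to apply Jensen's inequality to the concave function $\log$ to obtain
\[
\sum_{v:\,x_v\neq 0}\log|x_v| \;\leq\; \norm{x}_0\log\!\left(\frac{\norm{x}_1}{\norm{x}_0}\right),
\]
which, combined with the hypothesis $\min S\geq \norm{x}_1/\norm{x}_0$, yields
\[
\sum_{v:\,x_v\neq 0} m_v \;\leq\; \frac{\norm{x}_0\log(\norm{x}_1/\norm{x}_0)}{\log(\min S)} \;\leq\; \norm{x}_0.
\]
The boundary case $\norm{x}_1/\norm{x}_0=1$ is degenerate and handled separately: every nonzero $|x_v|$ equals $1$, so no prime in $S$ divides any $x_v$ and every bin is empty.

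With the total-balls bound $\leq \norm{x}_0$ in hand, at most $1/\epsilon$ of the $K$ bins can contain strictly more than $\epsilon\norm{x}_0$ balls. Since $P$ is uniform on $S$ and $K\geq 10/\epsilon$, the probability that $P$ is one of these heavy bins is at most $(1/\epsilon)/(10/\epsilon)=1/10$, finishing the argument. I expect the Jensen step to be the only non-routine piece: it trades the per-element ``at most $\log\norm{x}_\infty$ prime divisors'' bookkeeping of Theorem \ref{thm:random_p} for an averaged bound that extracts savings precisely when the average nonzero frequency is small.
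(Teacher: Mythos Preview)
Your proposal is correct and follows essentially the same route as the paper. The paper phrases the key step as AM--GM, namely $\frac{\sum_v |x_v|}{\tau}\geq\bigl(\prod_v |x_v|\bigr)^{1/\tau}\geq p_*^{\tau^{-1}\sum_k y_k}$, while you phrase it as Jensen for $\log$; these are the same inequality and yield the identical bound $\sum_v m_v=\sum_k y_k\leq \norm{x}_0\cdot\frac{\log(\norm{x}_1/\norm{x}_0)}{\log(\min S)}\leq \norm{x}_0$. Your final pigeonhole step (``at most $1/\epsilon$ bins can be heavy'') is likewise equivalent to the paper's Markov-type bound $\ind{y_k\geq\epsilon\norm{x}_0}\leq y_k/(\epsilon\norm{x}_0)$, and your separate treatment of $\norm{x}_1/\norm{x}_0=1$ is harmless but unnecessary, since $\min S\geq 2$ keeps the denominator $\log(\min S)$ bounded away from zero.
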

\begin{proof}
We only need to prove the second inequality.
Let $y_k=|\{v: p_k | x_v,x_v\neq 0\}|$, i.e.~the number of non-zero $x_v$s that have prime factor $p_k$. Without loss of generality, assume $x_1,x_2,\ldots,x_\tau$ are positive and $x_{\tau+1},\ldots,x_n$ are zeros. Then by the AM-GM inequality, 
\begin{align*}
    \frac{\sum_{i=1}^\tau x_i}{\tau} \geq \left(\prod_{i=1}^\tau x_i\right)^{\tau^{-1}} \geq \left(\prod_{i=1}^K p_i^{y_i}\right)^{\tau^{-1}} \geq p_*^{\tau^{-1}\sum_{i=1}^K y_i},
\end{align*}
where $p_*=\min S$. Thus we have $\sum_{i=1}^K y_i \leq \tau \frac{\log(\norm{x}_1/\tau )}{\log p_*}$.
Note that
\begin{align*}
    \pr(\norm{x}_{0}-\norm{x}_{0;\F_P}\geq \epsilon \norm{x}_0)&=\sum_{i=1}^K \frac{1}{K}\ind{y_k\geq \epsilon \norm{x}_0}
    \intertext{where we use $\ind{y_k\geq \epsilon \norm{x}_0} \leq \frac{y_k}{\epsilon \norm{x}_0}$,}
    &\leq \sum_{i=1}^K \frac{1}{K\epsilon \norm{x}_0}y_k= \frac{1}{K\epsilon \norm{x}_0}\sum_{i=1}^K y_k.
    \intertext{since $\sum_{i=1}^K y_i \leq \tau \frac{\log(\norm{x}_1/\tau )}{\log p_*}$}
&\leq \tau \frac{\log(\norm{x}_1/\tau)}{\log p_*} \frac{1}{K\epsilon \norm{x}_0} 
 \intertext{ by definition, $\tau = \norm{x}_0$}
 &= \frac{\log(\norm{x}_1/\norm{x}_0)}{\log p_*} \frac{1}{K\epsilon} 
\end{align*}
By assumption, $p_*=\min S \geq \norm{x}_1/\norm{x}_0$ and $K\geq 10/\epsilon$. We conclude that $ \pr(\norm{x}_{0}-\norm{x}_{0;\F_P}\geq \epsilon \norm{x}_0) \leq 1/10$.
\end{proof}
To store a prime in Theorem \ref{thm:avg_p}, one needs $\log (\Tilde{O}(\max(\norm{x}_1/\norm{x}_0, \epsilon^{-1} )))$ bits of space.   Therefore, by choosing a random $p$ with the scheme in Theorem \ref{thm:avg_p}, $\F_p$-\pcsa{} gets a new space complexity $O(\epsilon^{-2}\log n \cdot \max (\log(\frac{\norm{x}_1}{\norm{x}_0}),  \log(\epsilon^{-1})))$ to give a $(1\pm \epsilon)$-estimation of $L_0$ with a constant error probability.

Note that $\norm{x}_1/\norm{x}_0$ is precisely the \emph{average non-zero frequency}. Thus this scheme is particularly space-saving in a system where the maximum frequency is unbounded but the average frequency is small. 
\begin{remark}
    Similar upper bounds can be proved in the same way with $\frac{\norm{x}_1}{\norm{x}_0}$ replaced by $( \frac{\sum_{i=1}^\tau x_i^\gamma}{\norm{x}_0})^{\gamma^{-1}}$  for any $\gamma \in (0,1)$ by realizing that $
        \frac{\sum_{i=1}^\tau x_i^\gamma}{\tau} \geq \left(\prod_{i=1}^\tau x_i\right)^{\gamma \tau^{-1}}$.
\end{remark}

\subsection{$O(\epsilon)$-Mass Discount}\label{sec:discount}
\begin{theorem}
Given a state vector $x\in \Z^n$, without loss of generality, we assume $$|x_1|\geq |x_2|\geq\ldots \geq |x_\tau|>0=|x_{\tau+1}|=\ldots=|x_{n}|,$$ where $\tau = \norm{x}_0$. We define the $(\epsilon/2)$-mass discounted vector $x^{(\epsilon/2)}$ as follows.
\begin{align*}
    x^{(\epsilon/2)}_i &= 0, \quad  i\leq \epsilon \tau /2\\
    x^{(\epsilon/2)}_i &= x_i, \quad  i> \epsilon \tau /2.
\end{align*}

    Suppose $P$ is a random prime chosen by some scheme such that
    \begin{align*}
    \pr(\norm{x^{(\epsilon/2)}}_0 - \norm{x^{(\epsilon/2)}}_{0;\F_P}>\frac{\epsilon}{2}\norm{x^{(\epsilon/2)}}_0)&\leq \frac{1}{10},
    \intertext{then}
    \pr(\norm{x}_0 - \norm{x}_{0;\F_P}>\epsilon\norm{x}_0)&\leq \frac{1}{10}.
    \end{align*}    
\end{theorem}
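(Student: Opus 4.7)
The plan is to identify each of the two defects $\norm{x}_0 - \norm{x}_{0;\F_P}$ and $\norm{x^{(\epsilon/2)}}_0 - \norm{x^{(\epsilon/2)}}_{0;\F_P}$ with the size of a concrete subset of $U$, and then observe that these two sets differ in at most $\lfloor \epsilon\tau/2 \rfloor$ elements---precisely the indices that were zeroed out when forming $x^{(\epsilon/2)}$. Since this discrepancy already fits inside the $\epsilon\norm{x}_0$ budget, the ``good'' event for $x^{(\epsilon/2)}$ will deterministically imply the ``good'' event for $x$, and the hypothesized tail bound propagates with no loss of probability.

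Concretely, I would first note that $\norm{x}_0 - \norm{x}_{0;\F_P} = |A|$, where $A := \{v \in U : x_v \neq 0 \text{ and } P \mid x_v\}$, since these are exactly the coordinates that $\F_P$-reduction kills but that were counted in $\norm{x}_0$. Analogously, because $x^{(\epsilon/2)}$ agrees with $x$ on $v > \epsilon\tau/2$ and vanishes on $v \leq \epsilon\tau/2$, the same combinatorial identity gives $\norm{x^{(\epsilon/2)}}_0 - \norm{x^{(\epsilon/2)}}_{0;\F_P} = |A'|$, where $A' := A \cap \{v : v > \epsilon\tau/2\}$. This immediately yields $|A \setminus A'| \leq \lfloor \epsilon\tau/2 \rfloor \leq \frac{\epsilon}{2}\norm{x}_0$, and $\norm{x^{(\epsilon/2)}}_0 \leq \norm{x}_0$ is visible from the definitions.

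Combining these on the event $\bigl\{|A'| \leq \frac{\epsilon}{2}\norm{x^{(\epsilon/2)}}_0\bigr\}$---which has probability at least $9/10$ by hypothesis---we get
\begin{align*}
|A| \;\leq\; |A'| + \frac{\epsilon}{2}\norm{x}_0 \;\leq\; \frac{\epsilon}{2}\norm{x^{(\epsilon/2)}}_0 + \frac{\epsilon}{2}\norm{x}_0 \;\leq\; \epsilon\norm{x}_0,
\end{align*}
so the event $\{\norm{x}_0 - \norm{x}_{0;\F_P} > \epsilon\norm{x}_0\}$ is contained in the failure event for $x^{(\epsilon/2)}$ and inherits its $\leq 1/10$ probability bound. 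There is no real obstacle here: the argument is one line of bookkeeping once the combinatorial identifications of the two defects with $|A|$ and $|A'|$ are in place. The constant $\epsilon/2$ in the definition of $x^{(\epsilon/2)}$ has been chosen precisely so that the two contributions on the right sum to $\epsilon\norm{x}_0$.
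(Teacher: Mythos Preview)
Your proposal is correct and follows essentially the same approach as the paper's proof. Both arguments split the defect $\norm{x}_0 - \norm{x}_{0;\F_P}$ into a contribution of size at most $\frac{\epsilon}{2}\norm{x}_0$ from the discounted coordinates plus the defect for $x^{(\epsilon/2)}$, then invoke $\norm{x^{(\epsilon/2)}}_0 \leq \norm{x}_0$; the only cosmetic difference is that you name the sets $A$ and $A'$ explicitly, whereas the paper manipulates the norms directly using $\norm{x}_0 - \norm{x^{(\epsilon/2)}}_0 = \frac{\epsilon}{2}\norm{x}_0$ and $\norm{x}_{0;\F_P} \geq \norm{x^{(\epsilon/2)}}_{0;\F_P}$.
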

\begin{proof}
By construction we have $\norm{x}_0 - \norm{x^{(\epsilon/2)}}_0=\frac{\epsilon}{2}\norm{x}_0$. Thus we have
    \begin{align*}
       & \pr(\norm{x}_0 - \norm{x}_{0;\F_P}>\epsilon\norm{x}_0)\\
        &= \pr(\norm{x^{(\epsilon/2)}}-\norm{x}_{0;\F_P}>\frac{\epsilon}{2}\norm{x}_0) 
        \intertext{where we always have $\norm{x}_{0;\F_P}\geq \norm{x^{(\epsilon/2)}}_{0;\F_P}$ and  $\norm{x}_{0}\geq \norm{x^{(\epsilon/2)}}_{0}$, so}    
        &\leq \pr(\norm{x^{(\epsilon/2)}}-\norm{x^{(\epsilon/2)}}_{0;\F_P}>\frac{\epsilon}{2}\norm{x^{(\epsilon/2)}}_0)\leq 1/10.
    \end{align*}
\end{proof}
This theorem formally proves that we may replace $x$ by $x^{(\epsilon/2)}$ in the upper bounds we derived earlier in this section.

\section*{Acknowledgement}
The author would like to thank Hung Ngo for raising the question of finding a counter part of \textsf{PCSA/HyperLogLog} that allows deletions and is simple to implement in practice. 
The author would like to thank Yaowei Long to be a supportive sanity-checker. 

\bibliographystyle{plain}
\bibliography{refs}

\appendix 

\section{Asymptotic Analysis}\label{app:math}
\begin{lemma}\label{lem:sub_exponential}
Let $X_r$ be a random variable with density $\nu(z,r)$ and $r\in[1/2,1]$. Then $X_r$ is sub-exponential. Specifically,
\begin{align*}
    \E 2^{|X_r|/6}  < 4.
\end{align*}
\end{lemma}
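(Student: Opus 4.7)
The plan is to avoid integrating the density $\nu(\cdot,r)$ directly and instead work with the tail-integral identity
$$\E 2^{|X_r|/6} \;=\; 1 + \tfrac{\ln 2}{6}\int_0^\infty 2^{z/6}\bigl(\pr(X_r > z) + \pr(X_r < -z)\bigr)\,dz,$$
valid since $2^{|z|/6}$ is absolutely continuous with derivative $\pm(\ln 2/6)\,2^{|z|/6}$. It then suffices to establish geometric tail bounds on $X_r$ that comfortably dominate the weight $2^{z/6}$. Since $X_r$ has density $\nu(\cdot,r)$, Lemma \ref{lem:density} identifies $X_r$ in distribution with the highest position $W = B+\Theta$ of a subsketch in the poissonized scheme normalized so that $\norm{x}_0 m^{-1}=1$, so I can reason about the Bernoulli structure that produced $\nu$ rather than fight with the infinite product directly.

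\emph{Right tail.} By Theorem \ref{thm:char_typcsa_state} together with $1-e^{-x}\leq x$, the nonzero probability of register $S_i$ satisfies $\pr(S_i\neq 0\mid\Theta=\theta)\leq r\cdot 2^{-i-\theta}$. A union bound gives $\pr(B\geq k\mid\Theta=\theta)\leq r\cdot 2^{1-k-\theta}$, and integrating over $\theta\in[0,1)$ yields $\pr(X_r>z)\leq 2r\cdot 2^{-z}$ for $z\geq 0$. Since $2r\cdot 2^{-z}$ beats $2^{-z}$ and $1/6<1$, the right-half contribution to the tail integral is a convergent geometric piece of order $1/\ln 2$.

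\emph{Left tail.} This is where the hypothesis $r\geq 1/2$ is used. By Theorem \ref{thm:char_typcsa_state},
$$\pr(B\leq i\mid\Theta=\theta) \;=\; \prod_{j>i}\bigl(1-r(1-e^{-2^{-j-\theta}})\bigr).$$
For $i=-\lceil u\rceil$ with $u\geq 1$, every index $j$ satisfying $-\lceil u\rceil<j\leq 0$ gives $2^{-j-\theta}\geq 1$ and hence $1-e^{-2^{-j-\theta}}\geq 1-e^{-1}$, so the corresponding factor is at most $1-r(1-e^{-1})$. Because the map $r\mapsto 1-r(1-e^{-1})$ is decreasing, the lower bound $r\geq 1/2$ makes each such factor at most $\alpha := (1+e^{-1})/2 \approx 0.684 < 1$, uniformly in the field. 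Accumulating the $\Theta(u)$ such factors, and integrating over $\Theta$, gives $\pr(X_r<-u)\leq C\,\alpha^{u}$ for an absolute constant $C$ of order 1.

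Substituting both bounds into the tail-integral identity gives
$$\E 2^{|X_r|/6}\;\leq\; 1 + \tfrac{\ln 2}{6}\Bigl(\,2r\!\int_0^\infty 2^{-5z/6}\,dz \;+\; C\!\int_0^\infty (2^{1/6}\alpha)^z\,dz\Bigr),$$
and the crucial numerical fact is $2^{1/6}\alpha<1$, which makes the second integral converge; the first integral equals $6/(5\ln 2)$. The main obstacle is not conceptual but arithmetic — the bound $4$ is a specific number, so I must track constants (especially the fractional-offset slack when converting $B$ to $W$, and the off-by-one when counting factors near the boundary $j=0$) cleanly enough that the right-hand side stays beneath $4$ rather than creeping above it. The slack is, however, generous: the right-tail term is below $1$ in units of $6/\ln 2$ and the left-tail geometric sum is $O(1)$, so any reasonably careful accounting closes the proof.
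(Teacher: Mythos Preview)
Your approach is correct, and the arithmetic does close: taking $C=\alpha^{-1}$ (your claim $2^{-j-\theta}\geq 1$ actually fails at $j=0$ when $\theta>0$, so only the factors with $j\leq -1$ are usable, costing one power of $\alpha$), the bound comes to roughly
\[
1 \;+\; \frac{2r}{5} \;+\; \frac{\ln 2}{6}\cdot\frac{\alpha^{-1}}{\ln(1/\alpha)-(\ln 2)/6}\;\approx\; 2,
\]
comfortably under $4$.

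The paper takes a closely related but organizationally different route: it splits $\E 2^{|X_r|/6}$ directly as $\int_{-\infty}^{0}2^{-z/6}\nu(z,r)\,dz+\int_{0}^{\infty}2^{z/6}\nu(z,r)\,dz$ and bounds the density $\nu$ pointwise on each half---on the right by dropping the infinite product (so $\nu(z,r)\leq 1-e^{-2^{-z}}$), on the left by replacing $r$ with $1/2$ in the product---and then asserts that each resulting one-dimensional integral is numerically below $2$. Your version replaces the pointwise density bound by a tail-probability bound derived from the same Bernoulli structure and uses the layer-cake identity in place of the direct density split; these are essentially the same inequalities, just integrated once. What you buy is explicit constants that can be checked by hand rather than by an unexplained numerical assertion; what the paper buys is brevity.

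One small caveat: you cite Theorem~\ref{thm:char_typcsa_state} and Lemma~\ref{lem:density}, which are stated for $r=1-|F|^{-1}$ with $F$ a finite field, whereas the lemma allows any $r\in[1/2,1]$. This is harmless---the tail formulae you need can be read straight from the product form of $\nu$ in Definition~\ref{def:nu}, or equivalently from a virtual independent-Bernoulli system with $\pr(S_i\neq 0\mid\Theta=\theta)=(1-e^{-2^{-i-\theta}})r$---but you should say so rather than invoke the sketch-specific statements.
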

\begin{proof}
Note that
\begin{align*}
     \E 2^{|X_r|/6} = \int_{-\infty}^0 2^{-z/6}\nu(z,r)\,dz + \int_0^\infty 2^{z/6}\nu(z,r)\,dz.
\end{align*}
We estimate the two parts separately. Note that $\nu(z,r)\leq (1-e^{-2^{-z}})$ since $r\leq 1$.
We first estimate
\begin{align*}
    \int_0^\infty 2^{z/6}\nu(z,r)\,dz&\leq  \int_{0}^\infty 2^{z/6}(1-e^{-2^{-z}})\,dz < 2
\end{align*}
Note that since $r\in[1/2,1]$, $
\nu(z,r)\leq \prod_{j=1}^\infty (1-(1-e^{-2^{-z-j}})/2) $. Then we estimate
\begin{align*}
    \int_{-\infty}^0 2^{-z/6}\nu(z,r) \,dz \leq  \int_{-\infty}^0 2^{-z/6} \prod_{j=1}^\infty (1-(1-e^{-2^{-z-j}})/2) \,dz < 2
\end{align*}
We conclude that
\begin{align}
    \E 2^{|X_r|/6} < 2+2 =4,\label{eq:bound1}
\end{align}
from which we know $X_r$ is sub-exponential.
\end{proof}

\begin{lemma}\label{lem:taylor_expansion}
    Let $X_r$ be a random variable with density $\nu(z,r)$ and $r\in[1/2,1]$. Then
    \begin{align*}
         \log \E e^{tX_r}
    &= t\E X_r +\frac{t^2}{2}\var X_r + O(t^3),\quad t\to 0
    \end{align*}
\end{lemma}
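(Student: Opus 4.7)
\textbf{Proof proposal for Lemma \ref{lem:taylor_expansion}.} The plan is to treat $\log \E e^{tX_r}$ as the cumulant generating function and obtain its Taylor expansion directly from the Taylor expansion of $\E e^{tX_r}$, using Lemma \ref{lem:sub_exponential} to justify exchanges of expectation and series/remainder terms, \emph{uniformly} in $r\in[1/2,1]$.

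First, I would use Lemma \ref{lem:sub_exponential} to show all moments of $X_r$ exist and are bounded uniformly in $r$. Indeed, for any integer $k\geq 0$, $|x|^k \leq C_k \, 2^{|x|/6}$ for some constant $C_k$ depending only on $k$, so $\E |X_r|^k \leq 4 C_k$. In particular $\E X_r$, $\E X_r^2$, $\E X_r^3$ are finite and uniformly bounded for $r\in[1/2,1]$. Next, applying Taylor's theorem with the Lagrange form of the remainder to $e^{tx}$,
\begin{align*}
    e^{tx} = 1 + tx + \frac{(tx)^2}{2} + \frac{(tx)^3}{6} e^{\xi(t,x)},
\end{align*}
where $\xi(t,x)$ lies between $0$ and $tx$, so $|\xi(t,x)|\leq |tx|$. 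Restricting to $|t|\leq (\log 2)/12$, the remainder is bounded pointwise by $\tfrac{|t|^3}{6}|x|^3 e^{|x|\log 2/12}$, and its expectation is controlled by Cauchy-Schwarz together with Lemma \ref{lem:sub_exponential}:
\begin{align*}
    \E\bigl(|X_r|^3 e^{|X_r|\log 2/12}\bigr) \leq \sqrt{\E X_r^6} \cdot \sqrt{\E 2^{|X_r|/6}} \leq \sqrt{4C_6}\cdot 2,
\end{align*}
which is a finite constant independent of $r$. Therefore
\begin{align*}
    \E e^{tX_r} = 1 + t \E X_r + \frac{t^2}{2}\E X_r^2 + t^3 A_r(t),
\end{align*}
where $|A_r(t)|$ is bounded by a constant uniformly in $r\in[1/2,1]$ and in $|t|\leq (\log 2)/12$.

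Finally I would take the logarithm. Writing $u_r(t) := t\E X_r + \tfrac{t^2}{2}\E X_r^2 + t^3 A_r(t)$, the uniform bound on $\E X_r$, $\E X_r^2$, and $A_r(t)$ implies $u_r(t)=O(t)$ uniformly in $r$ as $t\to 0$, so for $t$ in a sufficiently small neighborhood of $0$ we may apply $\log(1+u)= u - u^2/2 + O(u^3)$. Expanding and collecting powers of $t$,
\begin{align*}
    \log \E e^{tX_r} &= u_r(t) - \tfrac{1}{2}u_r(t)^2 + O(u_r(t)^3)\\
    &= t \E X_r + \frac{t^2}{2}\E X_r^2 - \frac{t^2}{2}(\E X_r)^2 + O(t^3)\\
    &= t\E X_r + \frac{t^2}{2}\var X_r + O(t^3),
\end{align*}
where all higher-order contributions (including cross-terms like $t\E X_r\cdot t^2 \E X_r^2$) are absorbed into $O(t^3)$ because the relevant moments are bounded uniformly in $r$.

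The main obstacle is purely bookkeeping: ensuring the $O(t^3)$ constant does not depend on $r$, so that the expansion holds with a \emph{uniform} rate over $r\in[1/2,1]$ (which is what the downstream asymptotic analysis will require when plugging in $r=1-|F|^{-1}$ for all finite fields $F$). This uniformity is exactly what the constant bound in Lemma \ref{lem:sub_exponential} supplies; without it, one would need a field-dependent argument. Everything else is standard calculus.
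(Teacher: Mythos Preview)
Your proposal is correct and follows essentially the same two-step route as the paper: first show $\E e^{tX_r}=1+t\E X_r+\tfrac{t^2}{2}\E X_r^2+O(t^3)$ using the sub-exponential bound of Lemma~\ref{lem:sub_exponential}, then apply $\log(1+u)=u-\tfrac12 u^2+O(u^3)$. The only technical difference is in the first step: the paper expands $\E e^{tX_r}$ as the full power series, derives the explicit moment bound $\E|X_r|^k<4\cdot 3.2^k k^k$, and controls the tail $\sum_{k\geq 3}$ via Stirling, whereas you use Taylor with Lagrange remainder and a single Cauchy--Schwarz split---a slightly more economical argument for this lemma, though the paper's explicit moment bound is reused downstream to bound $\var X_r$ in Theorem~\ref{thm:psi_asym}.
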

\begin{proof}
Note that by the definition of the exponential function,
\begin{align*}
    \E e^{tX_r} = \sum_{k=0}^\infty \frac{t^k}{k!}\E X_r^k = 1 + t\E X_r +\frac{t^2}{2}\E X_r^2 + \sum_{k=3}^\infty \frac{t^k}{k!}\E X_r^k.
\end{align*}
We want to prove $\E e^{tX_r}=  1 + t\E X_r +\frac{t^2}{2}\E X_r^2 + O(t^3)$ as $t\to 0$. Then we need to bound the moments $\E |X_r|^k$.
Since $X_r$ is sub-exponential (Lemma \ref{lem:sub_exponential}), we can bound the moments in a standard way.
For $k\geq 1$,
    \begin{align}
        \E |X_r|^k&=\E (2^{ |X_r|/6}\cdot 2^{- |X_r|/6} |X_r|^k) = \int 2^{z/6} \cdot (2^{- z/6} z^k) \,d\mu_{|X_k|}\\
        &\leq \int 2^{z/6} \cdot \sup_{s\geq 0}(2^{- s/6}s^k) \,d\mu_{|X_k|}= \E 2^{ |X_r|/6} \sup_{s\geq 0}(2^{- s/6}s^k),\label{eq:moment}
    \end{align}
where $\sup_{s\geq 0}(2^{-s/6}s^k)=e^{-k}(6k/\log 2)^k$. Thus by Lemma \ref{lem:sub_exponential} and (\ref{eq:moment})
    \begin{align}
        \E |X_r|^k \leq 4 e^{-k}(6k/\log 2)^k < 4 \cdot3.2^k\cdot k^k,\label{eq:moment_bound}
    \end{align}
Thus we have
    \begin{align*}
        | \E e^{tX_r}-(1 + t\E X_r +\frac{t^2}{2}\E X_r^2)| &= \left|\sum_{k=3}^\infty \frac{t^k}{k!}\E X_r^k\right| \leq \sum_{k=3}^\infty \frac{t^k}{k!}\E |X_r|^k\\
        &\leq \sum_{k=3}^\infty \frac{t^k}{k!} 4 \cdot3.2^k\cdot k^k
    \end{align*}
    By Stirling's formula, we know for $k\in \N$,
    \begin{align*}
        k!> \sqrt{2\pi k}(k/e)^k>e^{-k} k^k.
    \end{align*}
    Thus
    \begin{align*}
        | \E e^{tX_r}-(1 + t\E X_r +\frac{t^2}{2}\E X_r^2)| &\leq 4\sum_{k=3}^\infty t^k (3.2e)^k\leq  8 t^3(3.2e)^3,
    \end{align*}
   as soon as $3.2et < 1/2$.
 Thus we have
\begin{align*}
   \E e^{tX_r}&= 1+t\E X_r +\frac{t^2}{2}\E X_r^2+O(t^3),\quad t\to 0.
    \intertext{Note that $\log(1+x) = x -\frac{1}{2}x^2 + O(x^3)$ as $x\to 0$. Since $X_r$ is sub-exponential (Lemma \ref{lem:sub_exponential}), both $|\E X_r|$ and $\E X^2_r$ are finite. Thus}
    \log  \E e^{tX_r}&= t\E X_r +\frac{t^2}{2}\E X_r^2-\frac{1}{2}t^2(\E X_r)^2 + O(t^3) \\
    &= t\E X_r +\frac{t^2}{2}\var X_r + O(t^3),\quad t\to 0,
\end{align*}
since $\var X_r = \E X_r^2 - (\E X_r)^2$.
\end{proof}

\begin{theorem}\label{thm:psi_asym}
Define $X_r$ to be a random variable with density $\nu(z,r)$ (see Definition \ref{def:nu}).  Then for $r\in[1/2,1]$,
\begin{align*}
    \phi(1/m,r)^{-m}&= 2^{-\E X_r}+O(m^{-1})\\
    \frac{\phi(2/m,r)^{m}}{\phi(1/m,r)^{2m}}  &= 1+(\log^2 2)m^{-1}\var X_r+O(m^{-2}),\quad m\to \infty.
\end{align*}
Furthermore, for any $r\in[1/2,1]$, $(\log^2 2)\var X_r<79$.
We define $\psi_\E(|F|) = \E X_{1-|F|^{-1}}$ and $\psi_\var(|F|) = \var X_{1-|F|^{-1}}$.
\end{theorem}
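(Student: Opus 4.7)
The plan is to derive both asymptotic identities uniformly from a single Taylor expansion of $\log\phi(t,r)$ around $t=0$. Since $\phi(t,r) = \int 2^{tz}\nu(z,r)\,dz = \E 2^{tX_r} = \E e^{(t\log 2)X_r}$, applying Lemma \ref{lem:taylor_expansion} at argument $t\log 2$ yields
\begin{align*}
\log\phi(t,r) = (\log 2)\,t\,\E X_r + \tfrac{1}{2}(\log^2 2)\,t^2\,\var X_r + O(t^3),\quad t\to 0.
\end{align*}
This is the one ingredient I would use for both identities.

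For the first claim, I would set $t=1/m$ and multiply by $m$ to obtain $m\log\phi(1/m,r) = (\log 2)\E X_r + \frac{\log^2 2}{2m}\var X_r + O(m^{-2})$. Exponentiating and keeping only the leading constant gives $\phi(1/m,r)^m = 2^{\E X_r}(1+O(m^{-1}))$, hence $\phi(1/m,r)^{-m} = 2^{-\E X_r} + O(m^{-1})$, where the constant $2^{-\E X_r}$ is finite for each fixed $r$ because $\E|X_r|$ is finite by the sub-exponential estimate of Lemma \ref{lem:sub_exponential}. For the second claim, I would compute $m\log\phi(2/m,r) - 2m\log\phi(1/m,r)$: the $\E X_r$ contributions cancel exactly, the $\var X_r$ contributions combine to $(\log^2 2)m^{-1}\var X_r$, and the remainders combine to $O(m^{-2})$. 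Since this difference is $O(m^{-1})$, exponentiating gives the stated identity.

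For the universal bound $(\log^2 2)\var X_r < 79$, I would reuse the moment estimate $\E |X_r|^k \leq 4e^{-k}(6k/\log 2)^k$ derived inside the proof of Lemma \ref{lem:taylor_expansion} (equation (\ref{eq:moment_bound})). Specializing to $k=2$ gives $\E X_r^2 \leq 576/(e^2\log^2 2)$, so $(\log^2 2)\var X_r \leq (\log^2 2)\E X_r^2 \leq 576/e^2 \approx 77.94 < 79$, a bound that is uniform in $r\in[1/2,1]$ because the sub-exponential constant $4$ from Lemma \ref{lem:sub_exponential} is.

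There is no real obstacle; the heavy lifting lives in Lemmas \ref{lem:sub_exponential} and \ref{lem:taylor_expansion}. The one point to watch is to keep factors of $\log 2$ straight when converting between the base-$2$ transform $\phi(t,r)=\E 2^{tX_r}$ appearing in the sketch's analysis and the natural-base Taylor expansion supplied by Lemma \ref{lem:taylor_expansion}, since a stray factor would propagate into $\psi_\E$ and $\psi_\var$ and spoil the variance constant used in Theorem \ref{thm:fpcsa}.
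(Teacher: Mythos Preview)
Your proposal is correct and follows essentially the same route as the paper: rewrite $\phi(t,r)=\E e^{(t\log 2)X_r}$, invoke Lemma~\ref{lem:taylor_expansion} to Taylor-expand $\log\phi(t,r)$ to second order, then plug in $t=1/m$ and $t=2/m$ and exponentiate; for the variance bound you both specialize the moment estimate~(\ref{eq:moment_bound}) to $k=2$ (your arithmetic $576/e^2\approx 77.94$ is just the sharper form of the paper's $4\cdot 3.2^2\cdot 2^2<164$ followed by $(\log^2 2)\cdot 164<79$).
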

\begin{proof}
By construction, we have $\phi(t,r)=\int_{-\infty}^\infty 2^{t z}\nu(z,r)\,dz = \E 2^{tX_r}$. Then we have
\begin{align*}
    \log (\phi(1/m,r)) &= \log \E 2^{X_r/m} = \log \E e^{(\log 2)X_r/m} 
    \intertext{by Lemma \ref{lem:taylor_expansion}, as $m\to \infty$,}
    &=((\log 2)/m)\E X_r +\frac{((\log 2)/m)^2}{2}\var X_r + O(m^{-3}).
\end{align*}
Thus we have
    \begin{align*}    
    \log (\phi(1/m,r)^{-m})&= (-m)  \log (\phi(1/m,r))\\
    &=(-m)\left(( (\log 2)/m)\E X_r +\frac{(\log (2)/m)^2}{2}\var X_r + O(m^{-3})\right)\\
    &=     - (\log 2)\E X_r + O(m^{-1}).\\
        \log\left(\frac{\phi(2/m,r)^{m}}{\phi(1/m,r)^{2m}} \right)
    &=m \log (\phi(2/m,r)) - 2m  \log (\phi(1/m,r)) \\ 
    &=m\left(( 2(\log 2)/m)\E X_r +\frac{(2\log (2)/m)^2}{2}\var X_r + O(m^{-3})\right)\\
        &-2m\left(( (\log 2)/m)\E X_r +\frac{(\log (2)/m)^2}{2}\var X_r + O(m^{-3})\right)\\
    &= (\log^2 2)m^{-1}\var X_r +O(m^{-2}).
    \end{align*}
Note that $e^x = 1+x + O(x^2)$ as $x\to 0$. Thus
\begin{align*}
    \phi(1/m,r)^{-m} &= e^{  - (\log 2)\E X_r + O(m^{-1})} =   e^{- (\log 2)\E X_r} (1+O(m^{-1}))=2^{-\E X_r}+O(m^{-1}).\\
    \frac{\phi(2/m,r)^{m}}{\phi(1/m,r)^{2m}}& = e^{(\log^2 2)m^{-1}\var X_r +O(m^{-2})} = 1 + (\log^2 2)m^{-1}\var X_r + O(m^{-2}).
\end{align*}
Finally, to bound $\var X_r$, note that
\begin{align*}
    \var X_r &= \E X_r^2 - (\E X_r)^2 \leq \E X_r^2 
    \intertext{by the moment bound (\ref{eq:moment_bound})}
    \E X_r^2& \leq 4 \cdot 3.2^2 \cdot 2^2 < 164.
\end{align*}
We conclude that $(\log^2 2)\var X_r < 79$.
\end{proof}

\end{document}